\newcommand{\Output}[1]{{\bf output} #1}
\let\oldnl\nl 
\newcommand{\nonl}{\renewcommand{\nl}{\let\nl\oldnl}}
\newcommand{\EG}[1]{\mathcal{E}_{#1}}
\newcommand{\etal}{\emph{et al.}\xspace}
\renewcommand{\P}{\mathcal{P}}
\renewcommand{\l}{\ell}
\newcommand{\lm}{\ell_{max}}
\newcommand{\lnm}{\ell_{\neg max}}
\newcommand{\N}{\Gamma_{G}}
\newcommand{\Nt}{\Gamma_{G_t}}
\newcommand{\Nx}[1]{\Gamma_{G_{#1}}}
\newcommand{\NC}{\mathcal{N}_C}
\newtheorem{definition}{Definition}
\newtheorem{theorem}{Theorem} 
\newtheorem{lemma}{Lemma} 
\begin{document}


\title{Faster maximal clique enumeration in large real-world link streams}

\author[1]{Alexis Baudin}
\author[1]{Clémence Magnien}
\author[1]{Lionel Tabourier}
\affil[1]{Sorbonne Université, CNRS, LIP6, F-75005 Paris}

\date{}

\maketitle


\begin{abstract}
  %
%
Link streams offer a good model for representing interactions over time.
They consist of links $(b,e,u,v)$, where $u$ and $v$ are vertices interacting during the whole time interval $[b,e]$.
In this paper, we deal with the problem of enumerating maximal cliques in link streams.
A clique is a pair $(C,[t_0,t_1])$, where $C$ is a set of vertices that all interact pairwise during the full interval $[t_0,t_1]$.
It is maximal when neither its set of vertices nor its time interval can be increased.
%
%
Some main works solving this problem are based on the famous Bron-Kerbosch algorithm for enumerating maximal cliques in graphs.
%
%
We take this idea as a starting point to propose a new algorithm
which matches the cliques of the instantaneous graphs formed by links existing at a given time $t$ to the maximal cliques of the link stream.
%
%
We prove its correctness and compute its complexity, which is better than the state-of-the art ones in many cases of interest.
We also study the output-sensitive complexity, which is close to the output size, thereby showing that our algorithm is efficient.
To confirm this, we perform experiments on link streams used in the state of the art, and on massive link streams, up to 100 million links.
%
%
In all cases our algorithm is faster, mostly by a factor of at least 10 and up to a factor of $10^4$.
Moreover, it scales to massive link streams for which the existing algorithms are not able to provide the solution.

\end{abstract}

\section{Introduction}

%
The analysis of real-world interaction networks has recently made significant progress as the field evolved from static to dynamic representations.
Indeed, the availability of temporal data, as well as the development of tools to describe and analyze them, have revealed the importance of the events' temporality.
They allow understanding the structure and functioning of complex interacting systems such as computer networks~\cite{miorandi2010}, online social networks~\cite{zhao2012multi}, human communication networks~\cite{peruani2011directedness}, or mobility networks~\cite{bajardi2011human,jacoby2016emerging}.

%
Static representations of interacting systems are usually based on graphs. Concerning dynamical networks, several formalisms have been developed at the same time.
Some choose to represent a dynamical network as a sequence of static images of equal duration -- see for instance~\cite{li2017fundamental}.
This allows to use standard graph tools, but also raises issues about the existence of an adequate time window of analysis~\cite{hossmann2009contacts,leo2019non}.
In the same vein, \textit{time varying graphs}~\cite{casteigts2012time} represent a dynamical network as a graph where links have attributes corresponding to the time-stamps at which they exist in the network.
Some use a representation -- often simply called \textit{temporal network}~\cite{holme2012temporal} -- which aims at gathering the different existing models of dynamical interacting systems, without making a strict choice of formalism.
In this paper, we use link streams~\cite{latapy2018stream}, which associate a time interval to each interaction.
It proposes a rigorous formalism and allows accounting for the temporal aspect of the data without any parameter or restricting choice of timescale. 
Yet, it is simple to go from one mode of representation to another.

%
Listing all maximal cliques in a graph is known to be NP-hard.
However, this problem is of utmost importance to analyze real-world graphs because it reveals high-density subgraphs and is thus a keystone to understand their structure.
For instance, clique mining is used to detect relevant dense subgraphs~\cite{gibson2005discovering,fratkin2006motifcut}, define communities~\cite{palla2005uncovering,baudin2022clique} or compress graphs~\cite{buehrer2008scalable}.
Efficient listing algorithms have been designed for large graphs representing real-world interaction networks.
In particular, the Bron-Kerbosch algorithm~\cite{bron1973} is used on large and sparse instances, especially by using the pivot improvement, which allows to considerably reduce the search space.
Currently, one of the most efficient versions of this algorithm is the pivot proposed by Tomita~\etal~\cite{tomita2006} and its implementation by Eppstein~\etal~\cite{eppstein2010}, which uses an adequate node ordering method and an efficient implementation for large sparse real-world graphs.

In link streams, cliques are sets of vertices interacting with all others during a time interval, and maximal cliques are maximal both in number of nodes and in duration.
Their listing has attracted interest in the last few years because it brings a powerful analysis tool to the field. To the best of our knowledge,
the first algorithm to achieve this task was proposed by Viard~\etal~\cite{viard2016} and later improved in~\cite{viard2018}.
In the meantime, Himmel \etal proposed a new version based on the adaptation of the Bron-Kerbosch algorithm to a dynamical context~\cite{himmel2017adapting}, which was later generalized to the notion of $k$-plex  by Bentert~\etal~\cite{bentert2019}.
Depending on the experimental settings, one or the other method can be faster.
However, these methods are still limited to relatively small dynamical networks, not allowing the enumeration on very large datasets.
Therefore, there is a need for more efficient algorithms to list maximal cliques in link streams.

%
The contributions of this paper are the following:
\begin{itemize}
\item[-] We propose a new algorithm for listing maximal cliques in link streams, which scales to massive real-world datasets,
\item[-] We analyze the complexity of this algorithm and provide two complexities: one that depends on the input and an output-sensitive complexity that is close to the output size, thereby showing that our algorithm is efficient.
  We also show that the memory requirements are close to optimal.
\item[-] We show experimentally that it significantly outperforms the state of the art and allows enumerating
  cliques in networks that are two orders of magnitude larger than what was previously feasible in the time and memory limits of the protocol,
  which allows us to step up from link streams with less than half a million links to link streams with more than 100 million links.
\item[-] Finally, we provide two implementations of the algorithm: one in \texttt{Python}, used for comparison to the state of the art and another in \texttt{C++}, which is the most efficient one currently available and also allows a parallel enumeration. 
  The code is publicly available\,\footnote{\url{https://gitlab.lip6.fr/baudin/maxcliques-linkstream}}.
\end{itemize}

%
The rest of this paper is organized as follows.
Section~\ref{sec:definition} gives the basic definitions and notations that we use throughout the paper.
Section~\ref{sec:related} presents work in the literature related to maximal clique enumeration in link streams.
Then, our new algorithm is presented in Section~\ref{sec:algorithm}.
In Section~\ref{sec:analysis}, we analyze this algorithm, giving a proof of correctness as well as two theoretical complexities: one as a function of input parameters and one as a function of output parameters of the enumeration.
Finally, in Section~\ref{sec:experiments}, we make an extensive experimental assessment of the performances of this algorithm, and briefly show the results of the parallel implementation.

\section{Basic definitions and notations}
\label{sec:definition}

\subsection{Cliques in a graph}

We briefly remind a few classic definitions in the context of graphs that are useful for this work.
We only consider simple undirected graphs.
Such a graph is defined as a pair $G = (V,E_G)$, where $V$ is a set of vertices, and $E_G$ a set of edges;
the edges of $E_G$ are of the form $\{x,y\}$, where $x,y \in V$ and $x \neq y$.
For a given vertex $u \in V$, the neighborhood of $u$, denoted $\N(u)$, is the set of vertices adjacent to $u$ in $G$, that is to say $\N(u) = \{v \in V \ | \ \{u,v\} \in E_G \}$.

A clique $C$ of a graph $G = (V,E_G)$ is a set of vertices of $V$ which are all connected to each other, or more formally
$$\forall u,v \in C \textit{ with } u \neq v, \{u,v\} \in E_G.$$
A clique is \emph{maximal} if it is not included in any other.
If $C$ is a clique of a graph $G$, then the \emph{neighborhood} of $C$ in $G$, denoted $\NC$, is the set of vertices which are neighbors of all vertices of $C$ in $G$: $\NC = \underset{v \in C}{\bigcap} \N(v).$

\subsection{Cliques in a link stream}\label{subsec:cliques_ls}

We recall here a few definitions on link streams which are needed for the rest of this paper.

\begin{definition}[Link stream]
  A link stream is a triplet $L = (T,V,E)$ where $T=[\alpha, \omega] \subset \mathbb{R} $ is a time interval, $V$ a set of nodes and $E \subseteq T \times T \times V \times V$ a set of links such that for all links $(b,e,u,v)$ in $E$, the beginning and ending times of the link $b$ and $e$ are such that $e \geq b$. We call $e-b$ the duration of the link. 
\end{definition}

In the rest of this article, the link streams that we use are undirected, \textit{i.e.} there is no distinction between $(b,e,u,v) \in E$ and $(b,e,v,u) \in E$. 
They are also simple, \textit{i.e.} for all $(b,e,u,v)$ in $E$, $u \neq v$ and for all $(b,e,u,v)$ and $(b',e',u,v)$ in E, $[b,e] \cap [b',e'] = \emptyset$.

A clique of a link stream $L=(T,V,E)$ is defined as follows:

\begin{definition}[Clique of a link stream]
  A clique is a pair $(C, [t_0,t_1])$
  where  $t_0,t_1 \in T$ are respectively called the start and end times of the clique, and $C \subseteq V$ is the set of vertices in the clique, with $|C| \geq 2$.
  Each pair of vertices of $C$ is connected by a link existing during the whole interval~$[t_0,t_1]$. 
  
  Formally, $(C, [t_0,t_1])$ is such that:
  $$\forall u,v \in C \textit{ with } u \neq v, \ \exists (b,e,u,v) \in E \text{ s.t. } [t_0,t_1] \subseteq [b,e].$$
\end{definition}

Note that for the sake of simplicity, we use the term clique both to designate a clique $C$ in a graph and a clique $(C, [t_0,t_1])$ in a link stream, while these objects are different in nature.
This is because in general the context allows removing the ambiguity on the kind of object considered.
As for a clique in a graph, a clique in a link stream can contain other cliques. 
During the exhaustive enumeration, we are only interested in those that are maximal.
In a link stream, the notion of maximality applies both to time and to vertices.
The following definitions formalize this maximality.
In this paper, we are interested in cliques that are maximal both in terms of time and vertices.
We start by defining these two notions that will be used in the description of our algorithm before defining maximal cliques formally.

\begin{definition}[Time-maximal clique]
  A time-maximal clique $(C,[t_0,t_1])$ is a clique which cannot be extended in time:  there is no clique
  $(C,[t_0',t_1'])$ with $[t_0,t_1] \subsetneq [t_0',t_1']$.
  
\end{definition}

\begin{definition}[Vertex-maximal clique]
  A vertex-maximal clique $(C,[t_0,t_1])$ is a clique which cannot be extended in vertices: there is no clique  $(C',[t_0,t_1])$ with  $C \subsetneq C'$.
\end{definition}

\begin{definition}[Maximal Clique]
  A maximal clique $(C,[t_0,t_1])$ in a link stream is a clique which is time-maximal and vertex-maximal.
\end{definition}

To illustrate these definitions, the left panel of Figure~\ref{fig:ls-cliques} shows a link stream with its maximal cliques.

\begin{figure}[!hbt]
  \begin{minipage}{0.7\linewidth}
    \centering
    \includegraphics[width=0.7\linewidth]{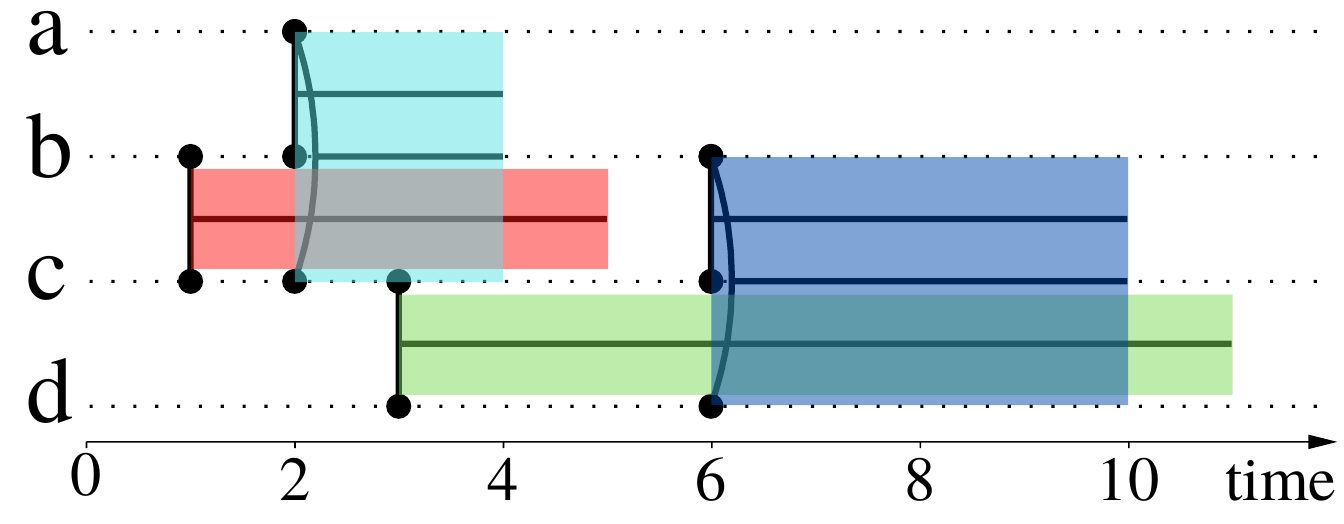} 
  \end{minipage}
  \begin{minipage}{0.25\linewidth}
    \includegraphics[width=0.8\linewidth]{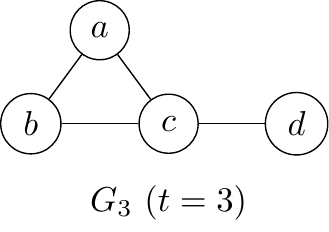} 
  \end{minipage}
  \caption{
    {\bf Left:} a link stream with  interaction time on the abscissa and vertices on the ordinate.
    For example, there is a link between $b$ and $c$ during the time interval $[1,5]$. The maximal cliques are represented in color, \textit{e.g.}, on the interval $[2,4]$, the three vertices $a,b,c$ are linked together, and form a maximal clique $(\{a,b,c\},[2,4])$.  
    {\bf Right:} the instantaneous graph $G_3$ of this link stream at~$t = 3$.}
  \label{fig:ls-cliques}
\end{figure}

\subsection{Instantaneous graph of a link stream}

We now give a few definitions that characterize a link stream $L = (T,V,E)$ at a given time $t \in T$.
The edges existing at $t$ can be seen as the edges of a graph that we call the instantaneous graph of $L$ at time $t$. 
This graph contains the vertices in $V$ and the edges existing at time $t$,
or more~formally:

\begin{definition}[Instantaneous graph $G_t$ associated to a link stream at time $t$]
  Given a ti\-me $t\in T$, the instantaneous graph $G_t$ at time $t$ associated to the link stream $L$ is the graph $G_t = (V,E_{G_t})$ such that
  $$E_{G_t} = \{\{u,v\} \ | \ \exists (b,e,u,v) \in E, \ t \in [b,e]\}.$$
\end{definition}

To avoid any confusion, we use the term \textit{edge} for the elements of $ E_G $ in a graph $ G=(V,E_G)$ and \textit{link} for the elements of $E$ in a link stream $ L=(T,V,E)$.
The edges of $G_t$ are induced by the links of $L$, the corresponding link thus has an end time in $T$.
We formalize below the notion of end time associated with an edge of $G_t$  and the final time of a clique of $G_t$ that follows.

\begin{definition}[End time $\EG{t}(u,v)$ of an edge $\{u,v\}$ of ${G_t}$]
  Let $\{u,v\} $ be an edge of ${G_t}$. By definition of $G_t$, there is a unique link $(b,e,u,v) \in E$ such that $t \in [b,e]$. We call $e$ the end time of the edge $\{u,v\}$ in $G_t$ and denote it $\EG{t}(u,v)$.
\end{definition}

\begin{definition} [Final time $\EG{t}(C)$ of a clique $C$ of $G_t$]
  Let $C$ be a clique of $G_t$. The final time $\EG{t}(C)$ of clique $C$ is the minimum of the end times of the edges of  $C$ in $G_t$. Formally:
  $$\EG{t}(C) = \min_{u,v \in C} \{ \EG{t}(u,v) \}.$$ 
\end{definition}

For instance, on the right panel of Figure~\ref{fig:ls-cliques}, the instantaneous graph at time $t=3$ is the graph $G_3 = (\{a,b,c,d\}, E_3)$, with $E_3 = \{\{a,b\},\{a,c\},\{b,c\},\{c,d\}\}$.
The end times of its edges are $\EG{3}(a,b) = \EG{3}(a,c) = 4$, $\EG{3}(b,c) = 5$ and $\EG{3}(c,d) = 11$.
$G_3$ contains the clique $\{a,b,c\}$, which also exists in $G_2$ and in $G_4$, and whose final time is $4$, corresponding to the minimum of the end times of the three edges.

\section{Related work}
\label{sec:related}

\subsection{Maximal cliques in graphs}
\label{sec:BKStatic}

We first describe the Bron-Kerbosch algorithm~\cite{bron1973}, denoted BK, which is widely used to detect maximal cliques in graphs representing real-world networks, and serves as a basis for the design of the algorithm that we propose in this work.
It is formally described in Algorithm~\ref{algo:BKstatic}.

\begin{algorithm}[!hbt]
  \DontPrintSemicolon 
  \KwIn{Graph $G = (V,E)$}
  \KwOut{All maximal cliques of $G$ without duplicates}
  \BK{$\emptyset$, $V$, $\emptyset$} \;
  \Fn{\BK{$R$, $P$, $X$}}{
    \If{$P \cup X = \emptyset$} { \label{line:BKstatic:maxtext_start}
      \Output{$R$ maximal clique} \;
    }
    \For {$u \in P$} { \label{line:BKstatic:Foru} 
      \BK{$R \cup \{u\}$, $P \cap \N(u)$, $X \cap \N(u)$} \label{line:BKstatic:RC} \;
      $P \gets P \setminus \{u\}$ \;
      $X \gets X \cup \{u\}$ \;
    }
  }
  \caption{Bron-Kerbosch algorithm (without pivot) on a graph $G$}
  \label{algo:BKstatic}
\end{algorithm}

Broadly speaking, it is a recursive backtracking algorithm.
The central idea is that each call maintains a clique $R$ and the neighbors of all the vertices of $R$, which are distributed in two sets: $P$ and $X$. 
Each vertex of $P \cup X$ can potentially expand the clique $R$. 
$P$ corresponds to the candidate nodes that are actually used to expand $R$, while $X$ corresponds to the vertices that are forbidden for expanding $R$ to avoid the enumeration of duplicate cliques. 
Note that a clique $R$ is maximal if and only if there is no vertex which is a neighbor of all its vertices, that is to say if and only if~$P \cup X = \emptyset$.

In terms of complexity, this version of the BK algorithm makes exactly one recursive call per clique of the graph, maximal or not. Note that there can be many such cliques, as a given maximal clique of size $q$ contains $2^q$ sub-cliques.
To reduce this search complexity, Bron and Kerbosch have introduced the idea of selecting a pivoting vertex to prune the recursive call tree.
Indeed, by choosing a vertex $p \in P \cup X $, we do not need to make recursive calls for the vertices in $\N(p)$ because any maximal clique must include either $p$ or a node which is not in $ \N(p) $. So we do not miss any maximal clique by cutting these branches of the tree.
The corresponding modification is:

\begingroup
\setlength{\interspacetitleruled}{0pt} 
\setlength{\algotitleheightrule}{0pt}
\begin{algorithm}[!hbt]
  \DontPrintSemicolon
  \nonl \emph{Replace Line~\ref{line:BKstatic:Foru} of Algorithm~\ref{algo:BKstatic} by:} \;
  \nonl \mbox{}\phantom{for} Choose pivot $p \in P \cup X$ \;
  \nonl \mbox{}\phantom{for} {\bf for} $u \in P \setminus \N(p) $ {\bf do} \;
\end{algorithm}
\endgroup

The strategies to select the pivot and achieve an efficient pruning have been discussed in various works, \textit{e.g.}~\cite{koch2001enumerating}.
In particular, the one proposed by Tomita \etal~\cite{tomita2006} maximizes
$|P \cap \N(p)|$ over all $p \in P \cup X$ and this guarantees that the worst-case time complexity is in $ \mathcal{O}\left(3^{n/3}\right)$ with $n=|V|$.
Eppstein \etal~\cite{eppstein2010} use this strategy and also propose to make the first call of the {\tt BK} function on each vertex according to the degeneracy ordering.
The underlying idea is that when the vertices are processed in a random order, $|P|$ can be as large as the maximum degree, while when using the degeneracy ordering, $|P|$ cannot be larger than the graph degeneracy $\delta$, which is smaller than the maximum degree and usually low in graphs representing real-world networks.
They prove that their method ensures a time complexity in $\mathcal{O}\left(\delta \cdot n \cdot 3^{\delta/3}\right)$.
Combined with adequate data structures, their implementation is currently considered to be one of the most efficient versions available of the BK algorithm.

\subsection{Maximal cliques in  link streams}
\label{sec:relatedtemporal}

The question of listing maximal cliques in dynamical networks has emerged relatively recently as an important research question to describe the structure of interaction data with temporal information.
This problem has been considered in different ways: some see a dynamical network as a sequence of graph snapshots, which leads to tackle the problem as the evolution of cliques in graphs through time.
This is for instance the point of view adopted in~\cite{sun2017mining,das2019incremental}.
Other works model the dynamical network as a link stream, thus acknowledging the intrinsically temporal aspect of the clique itself and thus look at it as an object which should be redefined in the dynamical context~\cite{viard2016,himmel2017adapting,viard2018,bentert2019}.
We focus on this second point of view.

In this paper, we consider works that define cliques in link streams.
This issue has been addressed with slightly different formalisms: some consider the link streams formalism~\cite{viard2016,viard2018} and others use the temporal networks one~\cite{himmel2017adapting,bentert2019}.
All these works nonetheless consider cliques as a set of vertices interacting during a given time interval, as discussed in Section~\ref{subsec:cliques_ls}.
Both representations contain the same information, and it is simple to go from one representation to the other, making it possible to compare the efficiency of the different methods.

Viard \etal~\cite{viard2016} proposed the first algorithm to list maximal cliques in link streams. 
In that work, links do not have duration, but the cliques are themselves parameterized with a value $ \Delta $, and thus called $ \Delta $-cliques.
In this work $\Delta$-cliques are defined such that all pairs of nodes in the $ \Delta $-clique interact at least once during each sub-interval of duration $ \Delta $.
Their algorithm has since then been both simplified and extended to the more general formalism of link streams that allow link duration~\cite{viard2018}.
The idea underlying this algorithm is to extend either in terms of time or nodes a clique in a link stream, starting from a specific link of the stream.
While this method indeed allows finding all maximal cliques in the link stream, it relies on memoization in order to decide whether a given clique has already been processed and this induces a high memory usage, which is prohibitive in many cases.

In parallel, Himmel \etal~\cite{himmel2017adapting} proposed another algorithm to enumerate $ \Delta $-cliques in link streams (named temporal graphs in these papers).
It adapts the BK algorithm to this dynamical setting, and implements different pivoting strategies.
This version regularly outperforms the algorithm in~\cite{viard2016}, especially on larger $ \Delta $ values. 
The results are more contrasted when compared to the algorithm in~\cite{viard2018}: depending on the experimental settings and the data under study, one or the other method may be more efficient.
More recently, Bentert \etal~\cite{bentert2019} have developed a generalization of this algorithm to list all maximal $\Delta $-$k$-plexes in link streams. 
A $ \Delta $-$k$-plex in a link stream is defined by analogy with $k$-plexes in a graph, which is a maximal subgraph of size $s$ such that any vertex in the $k$-plex is adjacent to at least $ s-k $ vertices in the subgraph.
So a $ \Delta $-$k$-plex in a link stream is a subset of $s$ nodes and a time interval $ \Delta $ in which each vertex interacts with at least $s-k$ vertices of the $ \Delta $-$k$-plex
during each sub-interval of duration $ \Delta $.
Note that a  $\Delta $-$1$-plex ($k=1$) is equivalent to a $ \Delta $-clique, which allows comparing this algorithm to the others described above.
Here also, the algorithm is based on the structure of the BK algorithm, but it contains a few implementation improvements on~\cite{himmel2017adapting}, which makes it more efficient in terms of practical computation times.
These two methods obtain running time bounds depending on a temporal variant of the degeneracy of the input graph, similarly to Eppstein~\etal~\cite{eppstein2010} in the static case.

Banerjee and Pal have proposed the notion of maximal $(\Delta,\gamma)$-cliques~\cite{banerjee2022efficient}, which is a generalization of maximal $\Delta$-clique: a $(\Delta, \gamma)$-clique is defined in the same way as a $\Delta$-clique, but each link must appear at least $\gamma$ times in each sub-interval of size $\Delta$.
Note that for $\gamma=1$, their definition is equivalent to maximal $\Delta$-cliques.
Molter~\etal also proposed a generalization of maximal clique enumeration in link streams~\cite{molter2021isolation}, by enumerating \emph{isolated} maximal cliques, that are cliques with few edges connected to the rest of the temporal network. The isolation is quantified by a factor $c$ that is the maximal number of links between the clique and the rest of the network. We can recover the
definition of maximal clique with an isolation condition that makes all cliques isolated (\textit{e.g.}
take $c=n^2$).

Finally, some methods which were originally designed to update cliques in graphs that evolve with time may be exploited in the context of clique enumeration in link streams.
In particular, Das \etal~\cite{das2019incremental} apply the Tomita \etal~\cite{tomita2006} clique enumeration method in this context.
This is done by enumerating the cliques that contain at least one edge that is new at the current time.
While the formal problem is different from ours, this method can be directly adapted to our problem, as we will see in Section~\ref{subsec:generalstructure}.

\section{Algorithm}
\label{sec:algorithm}

The algorithm that we propose is based on an adaptation of the BK algorithm to a dynamical setting, following the same logic as the one proposed in Himmel \etal~\cite{himmel2017adapting}.
We use the BK procedure to enumerate exactly once all sets of vertices $C$ that form a time-maximal clique $(C,I)$, and then filter these time-maximal cliques by keeping only those that are vertex-maximal. 
The main difference of our approach is that we work on neighborhoods which are limited to the interactions actually occurring at time $t$ and not to the complete link stream.

In what follows, we consider a simple link stream $ L=(T,V,E) $.
We describe the general structure of our algorithm, then we detail the two main points: first, enumerating the time-maximal cliques via the enumeration of cliques of instantaneous graphs $G_t$,
second, testing their vertex-maximality to filter out those which are not.
Finally, we improve the computation times
by pruning the search tree with a \textit{pivot} that is similar to the one used in~\cite{himmel2017adapting}.

\subsection{General structure of the algorithm}
\label{subsec:generalstructure}

The algorithm that we propose first enumerates exactly once each time-maximal clique and then filters the ones which are vertex-maximal.
For the first step, we use the equivalence given by Lemma~\ref{thm:timemax}.
\begin{lemma}[Time-maximality of a clique]
  \label{thm:timemax}
  $(C,\left[t_0,t_1\right])$ is a time-maximal clique if and only~if:
  \begin{description}
  \item [$(i)$] $C$ is a clique of $G_{t_0}$;
  \item [$(ii)$] There is an edge $\{u,v\}$ in $G_{t_0}$ with $u,v \in C$ that originates from a link $(t_0,e,u,v) \in E$ that begins at~$t_0$;
  \item [$(iii)$] $t_1 = \EG{t_0}(C)$.
  \end{description}
\end{lemma}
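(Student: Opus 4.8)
The plan is to prove the two implications separately, both times leaning on the simplicity of $L$: for a fixed pair $u,v$ the links $(b,e,u,v)\in E$ have pairwise disjoint intervals, so there is \emph{at most one} link of the form $(b,e,u,v)$ with $t_0\in[b,e]$. I would fix notation at the outset: for $u,v\in C$ such that $\{u,v\}$ is an edge of $G_{t_0}$, write $(b_{uv},e_{uv},u,v)$ for that unique link, so that $e_{uv}=\EG{t_0}(u,v)$ and $b_{uv}\le t_0\le e_{uv}$.

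For the direction ``time-maximal $\Rightarrow$ (i)--(iii)'', point (i) is immediate: a clique gives, for every pair $u,v\in C$, a link $(b,e,u,v)$ with $[t_0,t_1]\subseteq[b,e]$, hence $t_0\in[b,e]$ and $\{u,v\}\in E_{G_{t_0}}$, so $C$ is a clique of $G_{t_0}$. For (ii) I would argue by contradiction: if every $(b_{uv},e_{uv},u,v)$ has $b_{uv}<t_0$, set $t_0':=\max_{u,v\in C} b_{uv}$ (a finite maximum over the $\binom{|C|}{2}$ pairs); then $t_0'<t_0$ and $b_{uv}\le t_0'\le t_0\le t_1\le e_{uv}$ for every pair, so $(C,[t_0',t_1])$ is a clique strictly extending $[t_0,t_1]$, contradicting time-maximality. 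For (iii), observe first that $t_1\le e_{uv}=\EG{t_0}(u,v)$ for every pair, hence $t_1\le\EG{t_0}(C)$; if this were strict, then $t_1':=\EG{t_0}(C)$ satisfies $b_{uv}\le t_0\le t_1'\le e_{uv}$ for every pair, so $(C,[t_0,t_1'])$ is a clique strictly extending $[t_0,t_1]$, again a contradiction. Hence $t_1=\EG{t_0}(C)$.

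For the converse ``(i)--(iii) $\Rightarrow$ time-maximal'', I would first check that $(C,[t_0,t_1])$ is a clique: by (i) each pair $u,v\in C$ has its link $(b_{uv},e_{uv},u,v)$ with $t_0\in[b_{uv},e_{uv}]$, and by (iii) we get $t_0\le t_1=\EG{t_0}(C)\le e_{uv}$, so $[t_0,t_1]\subseteq[b_{uv},e_{uv}]$, while (ii) supplies a pair, hence $|C|\ge2$. For time-maximality, suppose a clique $(C,[t_0',t_1'])$ satisfies $[t_0,t_1]\subsetneq[t_0',t_1']$, i.e.\ $t_0'\le t_0$ and $t_1\le t_1'$ with at least one inequality strict, and split into two cases. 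If $t_0'<t_0$: take the link $(t_0,e,u,v)$ provided by (ii); the clique $(C,[t_0',t_1'])$ gives a link $(b,e',u,v)$ with $[t_0',t_1']\subseteq[b,e']$, and since $t_0\in[t_0',t_1']$, simplicity identifies $(b,e',u,v)=(t_0,e,u,v)$, so $b=t_0$; but then $t_0'\in[b,e']=[t_0,e]$ forces $t_0'\ge t_0$, a contradiction. If instead $t_0'=t_0$ and $t_1<t_1'$: for every pair $u,v\in C$ the clique $(C,[t_0,t_1'])$ gives a link $(b,e',u,v)$ with $t_0\in[b,e']$, so by simplicity $e'=\EG{t_0}(u,v)$ and $t_1'\le\EG{t_0}(u,v)$; taking the minimum over pairs gives $t_1'\le\EG{t_0}(C)=t_1$ by (iii), contradicting $t_1<t_1'$. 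Both cases being impossible, $(C,[t_0,t_1])$ is time-maximal.

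The argument is elementary; the only points requiring care are using simplicity at each step to \emph{identify} the specific link carrying a given pair at time $t_0$ — this is what converts the qualitative statement ``$C$ is a clique of $G_{t_0}$'' into quantitative statements about the end times $\EG{t_0}(u,v)$ — and the case split of a strict interval inclusion into ``extend on the left'' (ruled out by (ii)) versus ``extend on the right'' (ruled out by (iii)). I would also keep the boundary bookkeeping in mind ($t_0,t_1\in T$, $t_0\le t_1$, $|C|\ge2$), all of which follows for free from (ii), (iii) and the definition of $G_{t_0}$.
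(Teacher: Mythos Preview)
Your proof is correct and follows essentially the same approach as the paper's: both directions argue that left-extension is blocked by a link starting at $t_0$ (your (ii)) and right-extension is blocked by the minimum end time (your (iii)). The paper's version is terser --- it cites Viard~\etal\ for (ii) rather than arguing directly, and in the converse it does not make the case split or the appeal to simplicity explicit --- whereas you spell out the role of simplicity in identifying the unique link at $t_0$ and handle the two extension directions separately; this is the same argument, just more carefully written.
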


\begin{proof} 
  We suppose that $(C,\left[t_0,t_1\right])$ is a time-maximal clique.
  Then $C$ is trivially a clique of $G_{t_0}$ as    all pairs of nodes of $C$ are connected at time $t_0$.
  Moreover, Viard \etal~\cite{viard2018} proved in their Lemma~3 that there must exist an edge $\{u,v\}$ with $u,v \in C$ that originates from a link $(t_0,e,u,v) \in E$ that begins at $t_0$.
  Indeed, if all the links in the clique begin before $t_0$, its time interval can be extended to the left, and it is therefore not time-maximal.
  Finally, as the clique  $(C,\left[t_0,t_1\right])$ cannot be extended in time, it implies that the final time associated with the clique must be the maximum time when all links are present in the link stream, in other words $\EG{t_0}(C) = t_1$.
  
  Reciprocally, if $C$ is a clique of $G_{t_0}$ and $t_1 = \EG{t_0}(C)$, it means that all edges in $C$ originates from links of the link stream that are present over the whole interval $\left[t_0,t_1\right]$, so $(C,[t_0,t_1])$ is a clique of $L$.
  Moreover,
  the time interval of this clique is maximal,
  as (1) at least one link is no longer present after $t_1$;
  (2)
  if there is an edge $\{u,v\}$ with $u,v \in C$ that originates from a link $(t_0,e,u,v) \in E$ beginning at $t_0$, it means that at least one edge is not present before $t_0$.
  So the clique $(C,\left[t_0,t_1\right])$ is time-maximal. 
\end{proof}	

\begin{figure}[!h]
  \begin{minipage}{0.7\linewidth}
    \centering
    \includegraphics[width=0.7\linewidth]{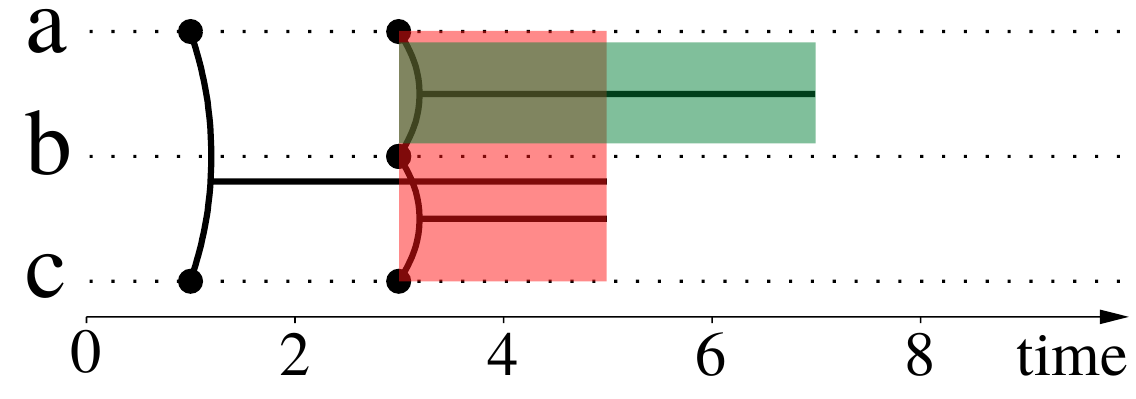}
  \end{minipage}
  \begin{minipage}{0.25\linewidth}
    \includegraphics[width=0.5\linewidth]{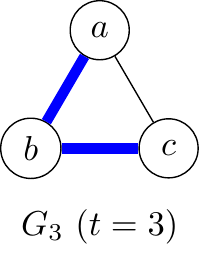}
  \end{minipage}
  \caption{
    A link stream with the two maximal cliques that start at $t=3$ in color, and its instantaneous graph $G_3$ with the edges that start at $t=3$ in blue. There are three time-maximal cliques that start at this instant, and each contains a new edge (in blue): $(\{a,b\},[3,7])$, $(\{b,c\},[3,5])$ and $(\{a,b,c\},[3,5])$. The clique $(\{b,c\},[3,5])$ is not vertex-maximal, while the two others are.}
  \label{fig:algo-example}
\end{figure}

We deduce from Lemma~\ref{thm:timemax} that we can enumerate the time-maximal cliques by going through $T$:
for each instant $t \in T$, we enumerate the time-maximal cliques that start at time $t$.
To do so, we enumerate each (graph) clique $C$ of $G_t$ that contains an edge $\{u,v\}$ of a link $(t,e,u,v) \in E$ starting at $t$.
Each of these graph cliques is then associated to the time-maximal clique $(C,[t,\EG{t}(C)])$ of the link stream.
The lemma ensures that we list all time-maximal cliques in this way.
Then, we only have to find those that are also vertex-maximal.
Figure~\ref{fig:algo-example} gives an illustration of this procedure at time $t=3$:
the time-maximal cliques that begin at $t=3$ are $(\{a,b\},[3,7])$, $(\{b,c\},[3,5])$ and $(\{a,b,c\},[3,5])$, and they all contain
at least one new link and their end time corresponds to the minimum of the end times of their links.
Then, only $(\{a,b\},[3,7])$ and $(\{a,b,c\},[3,5])$ are output, since $(\{b,c\},[3,5])$ is not vertex-maximal. 
This framework is summarized in Figure~\ref{fig:struct-algo}, and is implemented in Algorithm~\ref{algo:framework}.

\begin{figure}[!hbt]
  \centering
  \includegraphics[width=0.9\linewidth]{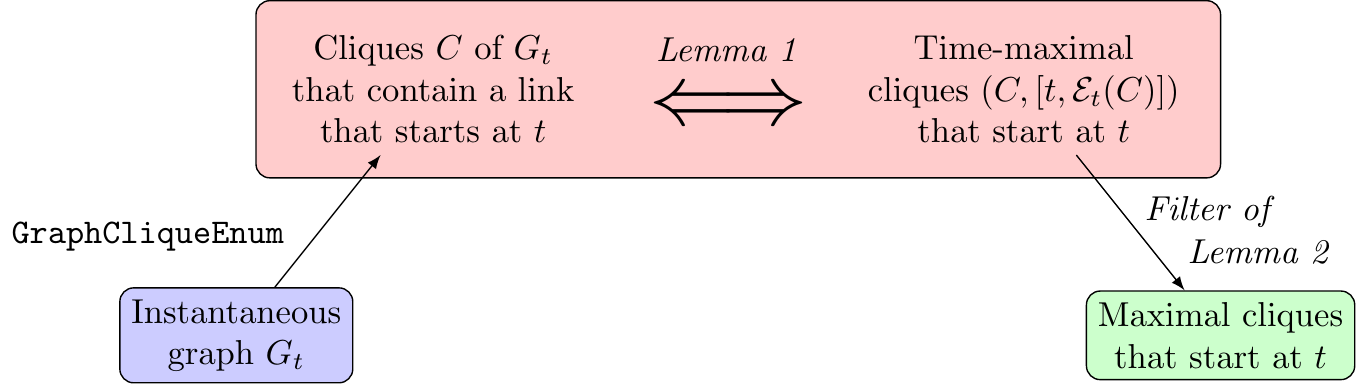}
  \caption{General structure of Algorithm~\ref{algo:framework}: for each time $t \in T$, it enumerates the maximal cliques that start at $t$.}
  \label{fig:struct-algo}
\end{figure}

\begin{algorithm}[!hbt]
  \DontPrintSemicolon
  \KwIn{A link stream $L=(T,V,E)$}
  \KwOut{All maximal cliques of $L$ without duplicates}
  \For {$t \in T$} { \label{line:framework:fort}
    $Cliques \gets \emptyset$ \; \label{line:framework:emptyCliques}
    $ForbidEdges \gets \emptyset$ \;
    \For (\tcp*[f]{Loop on links starting at time $t$}) {$(t,\_,u,v) \in E$} {
      \label{line:framework:fortuv}
      \tcp{Enumerating cliques in $G_t$ containing $\{u,v\}$:}
      $Cliques \gets Cliques \ \cup \ $\GCE{$\{u,v\}$, $\Nt(u) \cap \Nt(v)$, $\emptyset$, $ForbidEdges$, $t$} \;
      \label{line:framework:cliques}
      $ForbidEdges \gets ForbidEdges \cup \{\{u,v\}\}$ \; \label{line:framework:addforbedge}
    } \label{line:framework:endFor1}
    \For (\tcp*[f]{$\NC =$ neighborhood of $C$}) {$(C, \NC) \in Cliques$} {
      \label{line:framework:forC}
      \If (\tcp*[f]{Vertex maximality test}) {$\ \forall u \in \NC, \ \ \EG{t}(C \cup \{u\}) < \EG{t}(C)$} {
        \label{line:framework:maxtest}
        \Output{$\left(C,\left[t,\EG{t}(C)\right]\right)$}
        \label{line:framework:output}
      }
    } \label{line:framework:endForC}
  }

  \Fn{\GCE{$R$, $P$, $X$, $ForbidEdges$, $t$}}{ \label{line:staticenum:start}
    \Output {$(R,P\cup X)$} \tcp*{$P \cup X =$ neighborhood of $R$}
    \label{line:staticenum:output}
    $Q \gets \{u \in P \ | \ \exists \{u,v\} \in ForbidEdges, \ v \in R\}$ \;
    \label{line:staticenum:Q}
    \For {$u \in P \setminus Q$}{ \label{line:staticenum:forup}
      \GCE{$R \cup \{u\}$, $P \cap \Nt(u)$, $X \cap \Nt(u)$, $ForbidEdges$, $t$} \;
      \label{line:staticenum:RC}
      $P \gets P \setminus \{u\}$ \;
      \label{line:staticenum:Pu}
      $X \gets X \cup \{u\}$ \;
      \label{line:staticenum:Xu}
    }
  }
  \caption{Algorithm of maximal clique enumeration in link streams.}
  \label{algo:framework}
\end{algorithm}

We first describe the high-level features of Algorithm~\ref{algo:framework}. The details will be discussed in the rest of this section.
For each time $t \in T$ (Line~\ref{line:framework:fort}), the algorithm follows two steps:
\begin{itemize}
\item[-] Lines~\ref{line:framework:emptyCliques} to~\ref{line:framework:endFor1} enumerate exactly once each time-maximal clique beginning at $t$.
  To do this, the cliques of $G_t$ that contain an edge from a link starting at $t$
  are enumerated by processing every link $(t,\_,u,v) \in E$ that starts at $t$ (Line~\ref{line:framework:fortuv}).
  This enumeration is made by a call to the function \GCE (Line~\ref{line:framework:cliques}).
  Note the use of the set of edges $ForbidEdges$ that allows to avoid listing a same clique several times, which will be discussed in further details in Section~\ref{sec:staticenum}.
\item[-] Lines~\ref{line:framework:forC} to~\ref{line:framework:endForC} test whether the time-maximal cliques enumerated above are vertex-maximal or not. To do so, we use the neighborhood $\NC$
  of $C$ for each time-maximal clique $(C,[t_0,t_1])$.
  The maximality test (Line~\ref{line:framework:maxtest}) will be discussed in Section~\ref{sec:vertexmax}.
\end{itemize}

Note also that the iterations of the loop at Line~\ref{line:framework:fort} of the algorithm are independent of each other and can therefore be computed in parallel.
We describe this parallelization process and its results in Section~\ref{sec:experiments}.

\subsection{Clique enumeration in graphs $G_t$}
\label{sec:staticenum}

In this section, we discuss the procedure \GCE{$R, P, X, ForbidEdges, t$} which enumerates all graph cliques $C$ of $G_t$
satisfying $R \subseteq C \subseteq R \cup P$ and containing no edge of $ForbidEdges$.
It is called on Line~\ref{line:framework:cliques} of Algorithm~\ref{algo:framework} and detailed from Line~\ref{line:staticenum:start} to~\ref{line:staticenum:Xu}.
This procedure is a variant of the BK algorithm, described in Section~\ref{sec:BKStatic}: it is a backtracking recursive function which explores the neighborhood of the clique under construction to extend it.
Here, the clique under construction is $R$, its neighborhood is $P \cup X$, and the candidates for increasing $R$ without enumerating duplicates are the vertices of $P$.
The fact that $P \cup X$ is the neighborhood of clique $R$ is formally proved in Section~\ref{sec:validity} by Lemma~\ref{lemma:outputGCE}.
There are three major differences with the BK procedure described in Algorithm~\ref{algo:BKstatic} that should be noted:
\begin{itemize}
\item[-] each clique $R$ is output with its neighborhood $P \cup X$ (Line~\ref{line:staticenum:output}) because it is necessary for the vertex-maximality test of Line~\ref{line:framework:maxtest}, as discussed in Section~\ref{sec:vertexmax};
\item[-]  a clique is output whether it is maximal in $G_t$ or not  (Line~\ref{line:staticenum:output}): there is no test equivalent to Line~\ref{line:BKstatic:maxtext_start} of Algorithm~\ref{algo:BKstatic};
\item[-] the output cliques must not contain any edge of $ForbidEdges$;
  this is ensured by Line~\ref{line:staticenum:Q} which prevents to add a vertex $u$ to $R$ if it implies the addition of an edge $\{u,v\}$ of $ForbidEdges$  to the resulting clique $R \cup \{u\}$.
\end{itemize}

Using the set $ForbidEdges$ guarantees to list each clique at most once, as will be proved in Section~\ref{sec:analysis}.
Indeed, if a clique $C$ contains two edges $\{u_1,v_1\}$ and $\{u_2,v_2\}$, then it is in the set of cliques of $G_t$ which contain $\{u_1,v_1\}$, but also in the set of cliques which contain $\{u_2,v_2\}$.
Thus, if \GCE is called at $t$ on $\{u_1,v_1\}$ and $\{u_2,v_2\}$ without $ForbidEdges$, $C$ would be enumerated twice.
This idea is illustrated by the example in Figure~\ref{fig:cledges}:
\GCE is called first on $\{a,c\}$ and output cliques $\{a,c\}$, $\{a,b,c\}$ $\{a,c,d\}$ and $\{a,b,c,d\}$.
Then $\{a,c\}$ is added to $ForbidEdges$ and \GCE is called on $\{b,d\}$, outputting cliques $\{b,d\}$, $\{a,b,d\}$ and $\{b,c,d\}$ but not $\{a,b,c,d\}$ because it contains the edge $\{a,c\}$ which is in $ForbidEdges$.
Thus, all cliques containing at least one red edge are listed exactly once.

\begin{figure}[!hbt]
  \centering
  \includegraphics[width=0.25\linewidth]{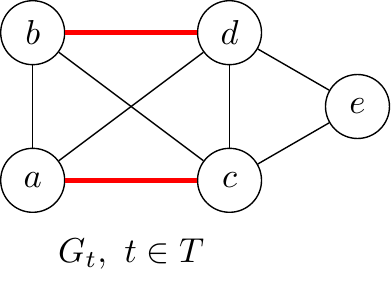}
  \caption{
    Example of a clique enumeration using $ForbidEdges$.
    An instantaneous graph $G_t$ of a link stream $L$ at time $t$ is represented.
    The thick red edges correspond to a link that begins at $t$, while the others correspond to links that have begun earlier.
    The clique $\{a,b,c,d\}$ contains the two new edges $\{a,c\}$ and $\{b,d\}$, which is why there is a need for the set $ForbidEdges$ to avoid enumerating it twice.
  }
  \label{fig:cledges}
\end{figure}

\subsection{Vertex maximality test of time-maximal cliques}
\label{sec:vertexmax}

In this section, we explain the vertex maximality test used in Line~\ref{line:framework:maxtest} of Algorithm~\ref{algo:framework}.
The following lemma allows filtering the vertex-maximal cliques among the time-maximal cliques.

\begin{lemma}[Vertex-maximality of a time-maximal clique]
  \label{thm:vertexmax}
  Let $(C,[t,\EG{t}])$ be a time-maximal clique and $\NC$ the neighborhood of clique $C$ in $G_t$   (\emph{i.e.} $\NC = \bigcap_{v \in C} \Nx{t}(v)$).
  Then
  $$ (C,[t,\EG{t}(C)]) \text{ is vertex-maximal } \Leftrightarrow \forall u \in \NC, \ \ \EG{t}(C \cup \{u\}) < \EG{t}(C).$$ 
\end{lemma}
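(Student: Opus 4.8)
The plan is to prove the two implications separately, each by contraposition, relying on two elementary observations about final times. First, $\EG{t}$ is antitone for inclusion: if $C \subseteq C'$ are both cliques of $G_t$, then $\EG{t}(C') \le \EG{t}(C)$, since $\EG{t}(C')$ is the minimum of end times over a larger set of edges; concretely, for $u \in \NC$ the set $C \cup \{u\}$ is a clique of $G_t$ (this is exactly what $u \in \NC$ means, given that $C$ is already a clique), and $\EG{t}(C \cup \{u\}) = \min\bigl(\EG{t}(C),\ \min_{v \in C} \EG{t}(u,v)\bigr) \le \EG{t}(C)$. Hence for $u \in \NC$ the inequality $\EG{t}(C \cup \{u\}) < \EG{t}(C)$ fails \emph{iff} $\EG{t}(C \cup \{u\}) = \EG{t}(C)$. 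Second, since $G_t$ is simple, $v \notin \Nx{t}(v)$, so $\NC \cap C = \emptyset$; thus for $u \in \NC$ we have $u \notin C$, and $C \cup \{u\}$ strictly contains $C$.

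For the implication ``$\Leftarrow$'', I would argue the contrapositive: suppose $(C,[t,\EG{t}(C)])$ is not vertex-maximal, so there is a clique $(C',[t,\EG{t}(C)])$ of $L$ with $C \subsetneq C'$. Pick any $u \in C' \setminus C$. Since $u$ is joined to every vertex of $C$ by a link covering the whole of $[t,\EG{t}(C)]$, and $t \in [t,\EG{t}(C)]$, we get $u \in \NC$, and moreover each of these links has end time at least $\EG{t}(C)$, i.e. $\EG{t}(u,v) \ge \EG{t}(C)$ for every $v \in C$. Plugging this into the formula $\EG{t}(C \cup \{u\}) = \min\bigl(\EG{t}(C),\ \min_{v \in C} \EG{t}(u,v)\bigr)$ gives $\EG{t}(C \cup \{u\}) = \EG{t}(C)$, so the right-hand side of the equivalence fails at this $u$.

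For ``$\Rightarrow$'', again by contraposition: suppose some $u \in \NC$ satisfies $\EG{t}(C \cup \{u\}) \not< \EG{t}(C)$, hence, by the antitonicity observation, $\EG{t}(C \cup \{u\}) = \EG{t}(C)$. Then every edge of the clique $C \cup \{u\}$ of $G_t$ has end time at least $\EG{t}(C \cup \{u\}) = \EG{t}(C)$, so all the links realizing these edges at time $t$ are present throughout $[t, \EG{t}(C)]$ (their beginnings are $\le t$ and their ends are $\ge \EG{t}(C)$); therefore $(C \cup \{u\}, [t, \EG{t}(C)])$ is a clique of $L$. Since $u \notin C$, this clique strictly extends $(C,[t,\EG{t}(C)])$ in vertices, contradicting vertex-maximality. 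Combining the two contrapositives finishes the proof.

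I expect the only delicate point — and the part worth writing carefully — to be the bookkeeping between ``clique of the link stream on $[t,\EG{t}(C)]$'' and ``clique of $G_t$ with its final-time data'': one has to check both that $C \cup \{u\}$ being a clique of $G_t$ with final time exactly $\EG{t}(C)$ really certifies that all the corresponding links survive on the entire interval $[t,\EG{t}(C)]$, and conversely that $u \in \NC$ is precisely what makes $\EG{t}(C \cup \{u\})$ well-defined. This is pure unwinding of the definitions of $\EG{t}(u,v)$ and $\EG{t}(\cdot)$; beyond the monotonicity of a minimum there are no estimates to carry out.
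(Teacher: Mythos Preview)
Your proof is correct and follows essentially the same approach as the paper's: both directions are argued by contraposition, translating between ``$(C\cup\{u\},[t,\EG{t}(C)])$ is a clique of $L$'' and ``$\EG{t}(C\cup\{u\}) \ge \EG{t}(C)$'' via the definition of the final time. Your write-up is in fact a bit more careful than the paper's, making the antitonicity of $\EG{t}$ and the fact that $u\notin C$ explicit, but there is no substantive difference in strategy.
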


In other words, a time-maximal clique $(C,[t,\EG{t}(C)])$ is vertex-maximal (and thus maximal) if and only if we cannot add any node to $C$ without reducing its final time.
Note that if the neighborhood of $C$ is empty, \textit{i.e.} $\NC = \emptyset$, then $(C,[t,\EG{t}(C)])$ is always vertex-maximal.

\begin{proof}
  Let $(C,[t,\EG{t}(C)])$ be a maximal clique and suppose
  that there exists some $u \in \underset{v \in C}{\bigcap} \Nx{t}(v)$ such that $\EG{t}(C \cup \{u\}) \geq \EG{t}(C)$.
  Then, by definition of $\EG{t}(C \cup \{u\})$, for any pair of nodes $x,y \in C \cup \{u\}, \ x \neq y$, there must be a link $(b,e,x,y) \in E$ such that $[t,\EG{t}(C)] \subseteq [b,e]$. In consequence, $(C \cup~\{u\}, [t,\EG{t}(C)])$ is a clique of $L$, which
  contradicts our hypothesis that $(C,[t,\EG{t}(C)])$ is maximal.
  This proves that a time-maximal clique $ (C,[t,\EG{t}(C)]) \text{ is vertex-maximal }$ implies that $\forall u \in~\underset{v \in C}{\bigcap} \Nx{t}(v), \ \ \EG{t}(C\cup~\{u\})  <~\EG{t}(C).$

  Reciprocally, suppose that $(C,[t,\EG{t}(C)])$  is not vertex-maximal, then by definition there must be a vertex $u \in  \underset{v \in C}{\bigcap} \Nx{t}(v)$ such that $(C \cup \{u\},[t,\EG{t}(C)])$ is a clique of $L$. This implies in turn that $ \EG{t}(C) \leq \EG{t}(C \cup \{u\})$.
  So, if $\forall u \in \underset{v \in C}{\bigcap} \Nx{t}(v), \ \ \EG{t}(C \cup \{u\}) < \EG{t}(C)$ then $(C,[t,\EG{t}(C)])$  is vertex-maximal.
\end{proof}

Therefore, the filtering test performed at Line~\ref{line:framework:maxtest} does correspond to the maximality test  of the above lemma.
For example, in Figure~\ref{fig:algo-example}, the time-maximal clique $(\{b,c\},[3,5])$ is not vertex-maximal, because we can add  vertex $a$ to it without reducing its final time. This is not the case for
$(\{a,b\},[3,7])$ and $(\{a,b,c\},[3,5])$ which are both time-maximal and vertex-maximal, and thus~maximal.

\subsection{Pivoting strategy to improve clique enumeration in $G_t$ graphs}
\label{sec:pivot}

As reported in Section~\ref{sec:BKStatic}, the BK algorithm in graphs is usually implemented with a pivot to prune the recursion tree.
A similar process can be implemented in the dynamical context to eliminate some recursive calls when enumerating time-maximal cliques, as proposed in~\cite{himmel2017adapting}. 
We adapt this improvement to the function \GCE, which corresponds to the following modification of Algorithm~\ref{algo:framework}:

\begingroup
\setlength{\interspacetitleruled}{0pt} 
\setlength{\algotitleheightrule}{0pt}
\begin{algorithm}[!hbt]
  \DontPrintSemicolon
  \nonl \emph{Replace Line~\ref{line:staticenum:forup} of Algorithm~\ref{algo:framework} by:} \;
  \nonl \mbox{}\phantom{for} Choose pivot $p \in P \cup X$ \;
  \nonl \mbox{}\phantom{for} $Del \gets \{u \in P \cap \Nt(p) \ | \ \EG{t}(R \cup \{u\}) = \EG{t}(R \cup \{u,p\})\}$ \;
  \nonl \mbox{}\phantom{for} {\bf for} $u \in (P \setminus Del) \setminus Q$ {\bf do}
\end{algorithm}
\endgroup

According to this procedure, vertices in the set $Del$ are not considered in the recursive calls.
Indeed,
not making recursive calls for these vertices does not remove any maximal clique from the enumeration.
This statement is formally proved by Theorem~\ref{thm:correctpivot}, in Section~\ref{sec:validity}.
This is justified by the observation that at a given time $t$, for any  maximal clique $(C,[t,\EG{t}(C)])$
such that $R\subseteq C \subseteq R \cup P$, and given a pivot $p\in P\cup X$, there is always a vertex $u \in P \setminus Del$ which allows extending $R$ towards $C$.
To observe that, there are two cases:
\begin{itemize}
\item[-] if there exists a vertex $u\in C$ that is not a neighbor of $p$, then $u \notin Del$ since $Del \subseteq \Nt(p)$, and $u \in P$ since $R \subseteq \Nt(p)$. Then, a recursive call with $u$ extends $R$ towards $C$  -- note that if $ p \in C$ then $u=p$ as $p \notin \Nt(p)$,
\item[-] else, all the vertices in $C$ are neighbors of $p$, which means on the one hand that $p \notin C$, and on the other hand that $C \cup \{p\}$ is a clique of $G_t$.
  Since $(C,[t,\EG{t}(C)])$ is maximal, this implies that $ \EG{t}(C \cup \{p\}) < \EG{t}(C)$.
  Therefore there must exist a vertex $u\in P\cap C$ such that
  $ \EG{t}(R \cup \{u\}) > \EG{t}(R \cup \{u,p\})\}$, {\em i.e.} $u$ does not belong to $Del$ and allows extending $R$ towards~$C$. 
\end{itemize}

Besides, we mentioned in Section~\ref{sec:BKStatic} that the Tomita \etal pivoting strategy for clique enumeration in graphs~\cite{tomita2006} chooses a pivot that maximizes the set of vertices to be removed from the candidates. 
Similarly, we select a pivot that allows to cut as many calls as possible: for each potential $p \in P \cup X$, we choose the one that maximizes $|Del|$.
It has been shown experimentally in~\cite{himmel2017adapting} that this choice does indeed achieve the most efficient pruning.
We evaluate in Section~\ref{sec:pivoting} the pruning efficiency by comparing the calculations made with and without a pivot.

\section{Analysis}
\label{sec:analysis}

In this section, we show that Algorithm~\ref{algo:framework} is correct: it allows enumerating once and only once each maximal clique of a link stream.
Then, we study its complexity, first as a function of the global parameters of the link stream, then as a function of the output of the algorithm.

\subsection{Correctness}
\label{sec:validity}

To show that Algorithm~\ref{algo:framework} is correct, we need the following preliminary lemmas: Lemma~\ref{lemma:outputGCE} which characterizes the output of the calls to $\GCE$ in the algorithm, and Lemma~\ref{lemma:enum-timemax} which shows that all the maximal cliques starting at time $t$ are enumerated by the procedure from Lines~\ref{line:framework:emptyCliques} to~\ref{line:framework:endFor1}.
Finally, the correctness of the algorithm is proved in Theorem~\ref{thm:validity}.


\begin{lemma}
  \label{lemma:outputGCE}
  In Algorithm~\ref{algo:framework}, each call to $\GCE{R,P,X,ForbidEdges,t}$ satisfies:
  \begin{itemize}
  \item[-] $R$ is a clique of $G_t$;
  \item[-] $P \cup X = \underset{v \in R}{\bigcap} \Nt(v)$.
  \end{itemize}
\end{lemma}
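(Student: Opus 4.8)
The plan is to argue by induction on the recursion depth of the calls to \GCE, tracking both invariants simultaneously. The two places where \GCE is invoked are Line~\ref{line:framework:cliques} of the main loop (the "root" calls) and Line~\ref{line:staticenum:RC} inside the function itself (the recursive calls); I would treat these as the base case and the inductive step respectively.

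\textbf{Base case.} Consider a root call \GCE{$\{u,v\}$, $\Nt(u) \cap \Nt(v)$, $\emptyset$, $ForbidEdges$, $t$} made on Line~\ref{line:framework:cliques}. This call is triggered by a link $(t,\_,u,v) \in E$ (Line~\ref{line:framework:fortuv}), so the edge $\{u,v\}$ exists in $G_t$, hence $R = \{u,v\}$ is a clique of $G_t$ (it has $|R| = 2$ and its unique pair of vertices is adjacent). For the second invariant, $P \cup X = (\Nt(u) \cap \Nt(v)) \cup \emptyset = \Nt(u) \cap \Nt(v) = \bigcap_{w \in \{u,v\}} \Nt(w)$, which is exactly the required identity.

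\textbf{Inductive step.} Suppose some call \GCE{$R, P, X, ForbidEdges, t$} satisfies both invariants, and consider the recursive call it makes on Line~\ref{line:staticenum:RC}, namely \GCE{$R \cup \{u\}, P \cap \Nt(u), X \cap \Nt(u), ForbidEdges, t$}, for a vertex $u$ selected from $P$ by the loop on Line~\ref{line:staticenum:forup}. Here I must be careful: between iterations of the loop, Lines~\ref{line:staticenum:Pu} and~\ref{line:staticenum:Xu} mutate $P$ and $X$, so I should let $P, X$ denote their values at the moment the recursive call is made. Since $u$ is (still) in $P$ at that moment and $P \cup X = \bigcap_{v \in R} \Nt(v)$ by the inductive hypothesis, $u$ is a neighbor in $G_t$ of every vertex of $R$; as $R$ is a clique of $G_t$, $R \cup \{u\}$ is therefore also a clique of $G_t$, establishing the first invariant for the child call. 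For the second invariant, I compute
\[
(P \cap \Nt(u)) \cup (X \cap \Nt(u)) = (P \cup X) \cap \Nt(u) = \Big(\bigcap_{v \in R} \Nt(v)\Big) \cap \Nt(u) = \bigcap_{v \in R \cup \{u\}} \Nt(v),
\]
using distributivity of intersection over union and the inductive hypothesis. This is precisely the claimed identity for the child call, completing the induction. (The set $ForbidEdges$ and the reduction $P \setminus Q$ on Line~\ref{line:staticenum:forup} — or the $P \setminus Del$ variant of Section~\ref{sec:pivot} — only restrict which $u$ are chosen; they never affect the sets passed along, so they are irrelevant to this lemma.)

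\textbf{Main obstacle.} The only subtlety is the in-place mutation of $P$ and $X$ by the loop body on Lines~\ref{line:staticenum:Pu}--\ref{line:staticenum:Xu}: one must verify that the invariant $P \cup X = \bigcap_{v \in R} \Nt(v)$ is preserved across the loop iterations of a \emph{single} call as well, since the recursive call in a later iteration sees the mutated sets. But moving a vertex $u$ from $P$ to $X$ leaves $P \cup X$ unchanged, so the invariant is maintained throughout the loop, and the argument above applies verbatim at each iteration. Everything else is routine set algebra.
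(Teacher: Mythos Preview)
Your proof is correct and follows essentially the same approach as the paper: induction on the recursion tree of \GCE, with the root calls of Line~\ref{line:framework:cliques} as the base case and the recursive calls of Line~\ref{line:staticenum:RC} as the inductive step, together with the observation that Lines~\ref{line:staticenum:Pu}--\ref{line:staticenum:Xu} preserve $P \cup X$. The set algebra you use in the inductive step is identical to the paper's, and your parenthetical remark that $ForbidEdges$, $Q$, and the pivot's $Del$ only restrict which $u$ is chosen (hence are irrelevant here) is a helpful clarification.
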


\begin{proof}
  \GCE being a recursive function, we prove this lemma by induction on the tree of the recursive calls made by Algorithm~\ref{algo:framework}.
  \begin{description}
  \item [Initialization.] The first call of a series of recursive calls is made at Line~\ref{line:framework:cliques} of Algorithm~\ref{algo:framework}, and is of the form \GCE{$\{u,v\}, \Nt(u) \cap \Nt(v), \emptyset, ForbidEdges , t$}, for some $u,v \in V$. 
    At this point, $R = \{u,v\}$, and $u$ and $v$ are connected in $G_t$ because they come from the link $(t, \_, u, v)$ of $E$ (Line~\ref{line:framework:fortuv}).
    So $R$ is indeed a clique of $G_t$. Besides, $P = \Nt(u) \cap \Nt(v)$ and $X = \emptyset$, so $P \cup X = \underset{v \in R}{\bigcap} \Nt(v)$.
  \item [Induction.]
    Suppose \GCE{$R,P,X,ForbidEdges,t$} satisfies that $R$ is a clique of $G_t$ and $P \cup X = \underset{v \in R}{\bigcap} \Nt(v)$.
    Let us show that the recursive calls made at Line~\ref{line:staticenum:RC} induced by this call maintain these properties.
    On the one hand, $P \cup X = \underset{v \in R}{\bigcap} \Nt(v)$ is an invariant of the loop starting at Line~\ref{line:staticenum:forup}: indeed, after the operations of Lines~\ref{line:staticenum:Pu} and~\ref{line:staticenum:Xu}, $P \gets P \setminus \{u\}$ and $X \gets X \cup \{u\}$, so $P \cup X$ is not modified. 
    On the other hand, at a given loop iteration associated with vertex $u$, the recursive call is made on $R' = R \cup \{u\}$, $P' = P \cap \Nt(u)$ and $X' = X \cap \Nt(u)$. 
    First, $u \in P$, so by the induction hypothesis, $u$ is neighbor of all vertices in $R$, thus $R'$ is a clique of $G_t$.
    Second, $\underset{v \in R'}{\bigcap} \Nt(v) = \underset{v \in R}{\bigcap} \Nt(v) \cap \Nt(u) = (P \cup X) \cap \Nt(u) = (P \cap \Nt(u)) \cup (X \cap \Nt(u))$, so $\underset{v \in R'}{\bigcap} \Nt(v) = P' \cup X'$.
    Thus, the properties are indeed true at each recursive call.
  \end{description}

\end{proof}


\begin{lemma}
  \label{lemma:enum-timemax}
  For $t \in T$, the set $Cliques$ at Line~\ref{line:framework:forC} of Algorithm~\ref{algo:framework} contains exactly one pair $(C,\NC)$ per time-maximal clique $(C,[t,\EG{t}(C)])$ of L that begins at $t$.
\end{lemma}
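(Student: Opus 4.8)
The plan is to peel the statement into two layers: an application of Lemma~\ref{thm:timemax} that translates ``time-maximal clique beginning at $t$'' into a purely static condition on $G_t$, and a precise description of what the recursion \GCE outputs, from which the counting then follows by bookkeeping on the set $ForbidEdges$. First I would reformulate the goal: by Lemma~\ref{thm:timemax}, $(C,[t,\EG{t}(C)])$ is a time-maximal clique beginning at $t$ exactly when $C$ is a clique of $G_t$ containing an edge $\{u,v\}$ (with $u,v\in C$) that originates from a link $(t,e,u,v)\in E$, and for such a $C$ the end time is forced to be $\EG{t}(C)$, so distinct time-maximal cliques beginning at $t$ correspond to distinct vertex sets $C$. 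Writing $(t,e_1,u_1,v_1),\dots,(t,e_m,u_m,v_m)$ for the links beginning at $t$ in the order they are processed on Line~\ref{line:framework:fortuv}, and $f_i=\{u_i,v_i\}$ for the associated edges of $G_t$ (pairwise distinct, since $L$ is simple), it then suffices to show that after the loop of Lines~\ref{line:framework:fortuv}--\ref{line:framework:endFor1} the set $Cliques$ contains, without repetition, exactly the pairs $(C,\NC)$ for which $C$ is a clique of $G_t$ with $f_i\subseteq C$ for some~$i$.

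The heart of the argument is a claim about a single top-level call: if $\GCE{R,P,X,ForbidEdges,t}$ is a call for which $R$ is a clique of $G_t$, $P\cup X=\bigcap_{v\in R}\Nt(v)$ (both granted by Lemma~\ref{lemma:outputGCE}), and $R$ contains no edge of $ForbidEdges$, then the whole recursion rooted at it outputs, each exactly once, the pair $(C,\NC)$ for every clique $C$ of $G_t$ with $R\subseteq C\subseteq R\cup P$ that contains no edge of $ForbidEdges$, and nothing else. I would prove this by induction on $|P|$ along the recursion tree. The property ``$R$ contains no edge of $ForbidEdges$'' is an invariant: a recursive call on Line~\ref{line:staticenum:RC} replaces $R$ by $R\cup\{u\}$ with $u\in P\setminus Q$, and $u\notin Q$ means precisely that $u$ forms no $ForbidEdges$ edge with any vertex of $R$. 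Combined with Lemma~\ref{lemma:outputGCE} (which also gives that the second component of each output pair, being $P\cup X$ at the moment of output, equals $\bigcap_{v\in C}\Nt(v)=\NC$) this yields the ``nothing else'' direction and the correct shape of the output.

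For completeness and non-repetition inside one call I would run the usual Bron--Kerbosch induction, adapted to account for the extra set $Q$. If $C=R$ it is output exactly once, on Line~\ref{line:staticenum:output}, and by no recursive call since those output strictly larger cliques. If $C\neq R$, one first notes $C\setminus R\subseteq P\setminus Q$: a vertex $u\in C\setminus R\subseteq P$ lying in $Q$ would give an edge of $ForbidEdges$ with both endpoints in $C$. Viewing the loop on Line~\ref{line:staticenum:forup} as iterating over the fixed set $P\setminus Q=\{w_1,\dots,w_k\}$ with $P$ shrinking and $X$ growing accordingly, let $w_j$ be the element of $C\setminus R$ of smallest index. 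In iteration $i<j$ the recursive call, being on $R\cup\{w_i\}$ with $w_i\notin C$, cannot output $C$; in iteration $i>j$ the recursive call works with candidate set $P\cap\Nt(w_i)$ that no longer contains $w_j$ (it has been moved to $X$ and is not in $Q$), so again $C$ is not output; and in iteration $j$ one checks that $C$ is admissible for the recursive call on $(R\cup\{w_j\},\,P\cap\Nt(w_j),\,X\cap\Nt(w_j),\,ForbidEdges,\,t)$ — indeed $R\cup\{w_j\}\subseteq C$, and any $x\in C\setminus(R\cup\{w_j\})$ is one of $w_{j+1},\dots,w_k$ by minimality and lies in $\Nt(w_j)$ because $x,w_j\in C$ — so by the induction hypothesis that call outputs $(C,\NC)$ exactly once. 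Hence $C$ is output exactly once in the call.

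It then remains to patch the single-call claim across the loop of Lines~\ref{line:framework:fortuv}--\ref{line:framework:endFor1}. At iteration $i$, \GCE is called with $R=\{u_i,v_i\}$ (a clique of $G_t$, not an edge of $ForbidEdges=\{f_1,\dots,f_{i-1}\}$ by distinctness of the $f$'s) and $P=\Nt(u_i)\cap\Nt(v_i)$, so by the claim it adds to $Cliques$ exactly the pairs $(C,\NC)$ for cliques $C\supseteq f_i$ of $G_t$ that contain none of $f_1,\dots,f_{i-1}$. A clique $C$ of $G_t$ is touched in some iteration iff it contains some $f_i$; and if it does, with $i^\ast$ the least such index, it is added only in iteration $i^\ast$ (earlier iterations require some $f_i\subseteq C$ with $i<i^\ast$, excluded by minimality; later iterations exclude $C$ because $f_{i^\ast}\in ForbidEdges$). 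Thus $Cliques$ ends up containing exactly one pair $(C,\NC)$ per clique $C$ of $G_t$ containing an edge of a link beginning at $t$, which by the reformulation of the first paragraph is exactly one pair per time-maximal clique of $L$ beginning at $t$. I expect the main obstacle to be the single-call claim, and within it the careful tracking of $P$, $X$ and $Q$ through the loop needed to exclude duplicate outputs; everything around it is a routine use of Lemmas~\ref{thm:timemax} and~\ref{lemma:outputGCE}.
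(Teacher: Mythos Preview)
Your proof is correct and rests on the same ingredients as the paper's: Lemma~\ref{thm:timemax} to translate time-maximal cliques beginning at $t$ into cliques of $G_t$ containing a new edge, a Bron--Kerbosch style induction to track the recursion, and the $ForbidEdges$ bookkeeping for uniqueness across the outer loop. The organisation differs, though. The paper fixes a target time-maximal clique $(C,[t,\EG{t}(C)])$ and proves by induction on $|R|$ that some call with $R\subseteq C\subseteq R\cup P$ and $ForbidEdges$ disjoint from $C$ exists at each level; uniqueness is then argued separately and rather tersely. You instead prove a stronger intermediate claim --- a full characterisation of the output of any single \GCE call satisfying the invariants --- by induction on $|P|$, handling existence and uniqueness simultaneously via the case split $i<j$, $i=j$, $i>j$ on the loop index, and only afterwards glue the top-level calls together using the minimal index $i^\ast$. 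Your decomposition is more modular and makes the uniqueness argument more explicit (the paper's ``each recursive call in the loop starting at Line~\ref{line:staticenum:forup} enumerates different cliques, as $u$ is deleted from $P$'' is doing the work of your whole case analysis in one sentence); the paper's route is shorter because it only ever needs existence of one path to $C$ plus a brief uniqueness check, rather than the full output description.
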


\begin{proof}
  Let $(C,[t,\EG{t}(C)])$ be a time-maximal clique of $L$, that begins at $t$.
  We show by induction on $2 \leq k \leq |C|$ the following property
  $\P(k)$: \textit{``There is a call $\GCE{R,P,X,ForbidEdges,t}$ within Algorithm~\ref{algo:framework} such that $R$ satisfies $|R|=k$,  $R \subseteq C \subseteq R \cup P$ and $ForbidEdges$ contains no edge of $C$''}.
  \begin{description}
  \item[Initialization.]
    From Lemma~\ref{thm:timemax}~$(ii)$, there exists at least one link $(t,e,u,v)$ in the link stream that begins at $t$ with $u,v\in C$.
    Consider the first such link processed by the loop starting at Line~\ref{line:framework:fortuv}.
    Let us then consider the call to $\GCE$ of this iteration, at Line~\ref{line:framework:cliques}.
    In this call, $ForbidEdges$ does not yet contain any edge of $C$.
    Moreover, $R = \{u,v\}$ and $P = \Nt(u) \cap \Nt(v)$, and as all the elements of $C$ must be neighbors of $u$ and $v$, we have $R \subseteq C \subseteq R \cup P$.
    Thus, $\P(2)$ is true.
  \item[Induction.]
    Let $k$ be such that $2 \leq k \leq |C|-1$, we assume $\P(k)$ to be true, and show $\P(k+1)$.
    Let $\GCE{R,P,X,ForbidEdges,t}$ be a call of Algorithm~\ref{algo:framework} corresponding to $\P(k)$.
    From $\P(k)$, we know that $ForbidEdges$ contains no edge of $C$, therefore $C \cap Q = \emptyset$ (Line~\ref{line:staticenum:Q}) and the vertices of $C \setminus R$ (which is not empty) that may extend $R$ are in $P \setminus Q$.
    Without loss of generality, we consider the first such vertex $u \in C \setminus R$ processed by the loop starting at Line~\ref{line:staticenum:forup}.
    In the recursive call \GCE{$R\cup \{u\},P\cap \Nt(u),X\cap \Nt(u),ForbidEdges,t$}, $ForbidEdges$ is unchanged, so it contains no edge of $C$; moreover, by construction we have $R \cup \{u\} \subseteq C \subseteq P \cap \Nt(u)$, and $|R \cup \{u\}| = k+1$.
    Therefore, $\P(k+1)$ is true.
  \end{description}
  When $k=|C|$, we have shown that there is a call in Algorithm~\ref{algo:framework} that outputs $(C,\NC)$ at time $t$.
  We prove now that there cannot exist another call that outputs $(C,\NC)$ at time $t$.
  At Line~\ref{line:framework:addforbedge}, $\{u,v\}$ is added to $ForbidEdges$ after the first call involving $(t,e,u,v)$ with $u,v \in C$.
  Then, as  $Q$ (Line~\ref{line:staticenum:Q}) prevents the clique under construction from containing an edge of $ForbidEdges$, no call can enumerate $C$ in the following iterations.
  Finally, if we consider this iteration, the call to $\GCE$ cannot enumerate $(C,\NC)$ twice, because each recursive call in the loop starting at Line~\ref{line:staticenum:forup} enumerates different cliques, as $u$ is deleted from $P$ at Line~\ref{line:staticenum:Pu}.
\end{proof}


\begin{theorem}[Correctness]
  \label{thm:validity}
  Algorithm~\ref{algo:framework} lists once and only once each maximal clique of the input link stream $L$ and nothing else.
\end{theorem}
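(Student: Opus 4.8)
The plan is to establish the three requirements of the statement—every output is a maximal clique, every maximal clique is output, and nothing is output twice—by combining the structural characterizations already proved (Lemma~\ref{thm:timemax} for time-maximality, Lemma~\ref{thm:vertexmax} for vertex-maximality) with the two preliminary lemmas about the behaviour of \GCE (Lemma~\ref{lemma:outputGCE} and Lemma~\ref{lemma:enum-timemax}). I would organize the argument around the outer loop on $t \in T$: since Lemma~\ref{thm:timemax}$(ii)$ forces every maximal clique to begin at some $t$ that is the start time of one of its links, and the loop on Line~\ref{line:framework:fort} visits every such $t$, it suffices to show that for each fixed $t$ the algorithm outputs exactly the set of maximal cliques $(C,[t,\EG{t}(C)])$ with start time $t$, each exactly once.

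\textbf{Soundness (every output is a maximal clique).} First I would fix an iteration $t$ and a pair $(C,\NC)$ that reaches Line~\ref{line:framework:forC}. By Lemma~\ref{lemma:outputGCE}, $C$ is a clique of $G_t$ and $\NC = P\cup X = \bigcap_{v\in C}\Nt(v)$ is genuinely its neighborhood in $G_t$, so the test on Line~\ref{line:framework:maxtest} is exactly the condition of Lemma~\ref{thm:vertexmax}. I then need to check that $(C,[t,\EG{t}(C)])$ is time-maximal so that Lemma~\ref{thm:vertexmax} applies: conditions $(i)$ and $(iii)$ of Lemma~\ref{thm:timemax} hold by construction (it is a clique of $G_{t}$ and its end time is set to $\EG{t}(C)$), and $(ii)$ holds because the chain of recursive calls producing this $(C,\NC)$ was rooted (Line~\ref{line:framework:cliques}) at a pair $\{u,v\}$ coming from a link $(t,\_,u,v)\in E$ that begins at $t$, with $u,v\in C$. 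Hence whenever the test passes, $(C,[t,\EG{t}(C)])$ is time-maximal and vertex-maximal, i.e. maximal, and this is exactly what Line~\ref{line:framework:output} emits.

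\textbf{Completeness and no duplicates.} For completeness, let $(C,[t,\EG{t}(C)])$ be any maximal clique; it is in particular time-maximal and begins at its appropriate $t$, so Lemma~\ref{lemma:enum-timemax} guarantees that the set $Cliques$ at that iteration contains exactly one pair $(C,\NC)$. Since $(C,[t,\EG{t}(C)])$ is also vertex-maximal, Lemma~\ref{thm:vertexmax} says the test on Line~\ref{line:framework:maxtest} succeeds, so $(C,[t,\EG{t}(C)])$ is output. For the no-duplicate part I would argue on two levels: within a single $t$, Lemma~\ref{lemma:enum-timemax} already states that $Cliques$ holds \emph{exactly one} pair per time-maximal clique beginning at $t$, so the loop on Line~\ref{line:framework:forC} emits it at most once; across different values of $t$, the same maximal clique cannot be emitted twice because its output start time is $t$ itself, and distinct iterations use distinct values of $t$. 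Finally, ``nothing else'' is covered by soundness above: the only things emitted are pairs $(C,[t,\EG{t}(C)])$ that pass the maximality test, and these were just shown to be maximal cliques of $L$.

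\textbf{Main obstacle.} The substantive content has essentially been front-loaded into Lemma~\ref{lemma:outputGCE} and especially Lemma~\ref{lemma:enum-timemax}, so the theorem itself is mostly a careful bookkeeping exercise in gluing them together; the one point that needs genuine care is verifying condition $(ii)$ of Lemma~\ref{thm:timemax} for the soundness direction—i.e. making sure that every $(C,\NC)$ reaching the maximality test really does trace back through \GCE's recursion to a root pair $\{u,v\}$ arising from a link \emph{starting} at $t$ (as opposed to merely existing at $t$). This follows from the structure of the call on Line~\ref{line:framework:cliques} being nested inside the loop on Line~\ref{line:framework:fortuv} over links $(t,\_,u,v)\in E$, together with the invariant from Lemma~\ref{lemma:outputGCE} that $R$ only ever grows, so the root pair remains a subset of $C$; I would state this explicitly rather than leave it implicit.
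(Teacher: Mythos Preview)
Your proposal is correct and follows essentially the same route as the paper: fix $t$, invoke Lemma~\ref{lemma:enum-timemax} to get exactly the time-maximal cliques starting at $t$ (once each), use Lemma~\ref{lemma:outputGCE} to certify that $\NC$ is the true neighborhood so that Line~\ref{line:framework:maxtest} implements Lemma~\ref{thm:vertexmax}, and conclude. The paper compresses all this into a single short paragraph; your three-part decomposition (soundness / completeness / no duplicates) is the same argument unpacked, and your explicit verification of condition~$(ii)$ of Lemma~\ref{thm:timemax} for the soundness direction is a point the paper leaves implicit in its reading of Lemma~\ref{lemma:enum-timemax} as a bijection.
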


\begin{proof}
  We prove that for any $t \in T$, the related iteration of the loop starting at Line~\ref{line:framework:fort} enumerates all the maximal cliques that start at $t$, without duplication.
  Indeed, according to Lemma~\ref{lemma:enum-timemax}, the loop starting at Line~\ref{line:framework:forC}  iterates over all pairs $(C, \NC)$ such that $(C,[t,\EG{t}(C)])$ is a time-maximal clique starting at $t$ without duplication.
  Moreover, according to Lemma~\ref{lemma:outputGCE}, $\NC = P \cup X = \bigcap_{v \in C} \Nt(v)$
  so, according to the vertex-maximality test of Lemma~\ref{thm:vertexmax}, the time-maximal cliques output after the test at Line~\ref{line:framework:maxtest} are all of those which are vertex-maximal.
  Thus, at iteration $t$, the algorithm outputs once all maximal cliques that start at $t$ and nothing else.
\end{proof}


We proposed in Section~\ref{sec:pivot} to introduce a pivot in order to prune the recursive call tree of $\GCE$.
By cutting off branches corresponding to redundant computations, it improves the practical running times without changing the output of the algorithm.
We now prove that the algorithm with a pivot is correct.

\begin{theorem}[Correct pivoting]
  \label{thm:correctpivot}
  Algorithm~\ref{algo:framework} with pivoting is correct: adding a pivot as described in Section~\ref{sec:pivot} does not change the output of the algorithm.
\end{theorem}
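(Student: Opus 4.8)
The plan is to show that the pivoting modification neither removes any maximal clique from the output nor introduces any spurious one. Since the pivot only restricts the set of vertices iterated over in the loop at Line~\ref{line:staticenum:forup} (replacing $P \setminus Q$ by $(P \setminus Del) \setminus Q$), it can only \emph{prune} recursive calls, never add new ones; hence no clique that was not output before can be output now, and — combined with the fact that $\GCE$ still outputs each visited $R$ together with $P \cup X$ exactly as before — Lemma~\ref{lemma:outputGCE} continues to hold verbatim, and the vertex-maximality test at Line~\ref{line:framework:maxtest} still operates on the correct neighborhood. So the only thing to prove is that \emph{every} maximal clique of $L$ is still reached.

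First I would fix $t \in T$ and a maximal clique $(C,[t,\EG{t}(C)])$ of $L$ that begins at $t$, and re-run the induction of Lemma~\ref{lemma:enum-timemax}, but now tracking, for each recursive call $\GCE{R,P,X,ForbidEdges,t}$ with $R \subseteq C \subseteq R \cup P$ and $ForbidEdges$ containing no edge of $C$, that the inductive step still produces a successor call with $R' = R \cup \{u\}$ for some $u \in C \setminus R$. The base case is untouched (the pivot does not affect Line~\ref{line:framework:cliques}). For the inductive step, the only new obligation is to exhibit a vertex $u \in (C \setminus R) \cap (P \setminus Del) \setminus Q$; we already know from the old argument that $(C \setminus R) \cap (P \setminus Q) \neq \emptyset$, so the real work is to show this set still meets $P \setminus Del$, i.e. that not all vertices of $C \setminus R$ that survive $Q$ lie in $Del$.

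This is exactly the dichotomy sketched in Section~\ref{sec:pivot}, which I would now turn into a rigorous case analysis. Let $p \in P \cup X$ be the chosen pivot. Case 1: some vertex $u \in C \setminus R$ is not a neighbor of $p$ in $G_t$. Then $u \notin \Nt(p) \supseteq Del$, so $u \notin Del$; and $u \in P$ (it is in $C \subseteq R \cup P$ and not in $R$), and $u \notin Q$ since $ForbidEdges$ contains no edge of $C$ and $u \in C$. So $u$ is an admissible choice. (If $p \in C$ this forces $u = p$, using $p \notin \Nt(p)$.) Case 2: every vertex of $C \setminus R$ — indeed, since $R \subseteq C$, every vertex of $C$ — is a neighbor of $p$; then $p \notin C$ (no self-loop) and $C \cup \{p\}$ is a clique of $G_t$. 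By maximality of $(C,[t,\EG{t}(C)])$ and Lemma~\ref{thm:vertexmax} (note $p \in \NC = \bigcap_{v \in C}\Nt(v)$), we get $\EG{t}(C \cup \{p\}) < \EG{t}(C)$. Unfolding the definition of final time, this strict inequality must be witnessed by an edge of $C \cup \{p\}$ incident to $p$, i.e. there is $u \in C$ with $\EG{t}(u,p) < \EG{t}(C)$; then $u \in C \setminus R$ — because $R \subseteq \Nt(p)$ and all edges of $R \cup \{p\}$ have end time $\geq \EG{t}(R \cup \{u\}) \geq \EG{t}(C) > \EG{t}(u,p)$ would be needed to keep $u \in R$ consistent, more simply $u \notin R$ since otherwise $\EG{t}(C) \le \EG{t}(u,p)$ — and $\EG{t}(R \cup \{u,p\}) = \min(\EG{t}(R\cup\{u\}), \EG{t}(u,p), \dots) \le \EG{t}(u,p) < \EG{t}(C) \le \EG{t}(R \cup \{u\})$, so $\EG{t}(R \cup \{u\}) \neq \EG{t}(R \cup \{u,p\})$, hence $u \notin Del$. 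Again $u \in P \setminus Q$ as in Case~1. In both cases the induction goes through, so $\P(|C|)$ holds and $(C,\NC)$ is still output.

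I expect the main obstacle to be the bookkeeping in Case 2: one must argue carefully that the strict drop in $\EG{t}$ when adding $p$ to $C$ is localized to an edge through $p$ and lies on a vertex $u$ that is genuinely a \emph{new} candidate (in $P \setminus R$, surviving $Q$, and outside $Del$), rather than something already absorbed into $R$ or blocked by $ForbidEdges$. The uniqueness half of Lemma~\ref{lemma:enum-timemax} needs no change, since the pivot only removes calls and $Q$ plus the deletion of $u$ from $P$ at Line~\ref{line:staticenum:Pu} still prevent any duplicate enumeration. Combining this with the unchanged vertex-maximality filter yields Theorem~\ref{thm:correctpivot}: the pivoted algorithm outputs exactly the same maximal cliques as the unpivoted one.
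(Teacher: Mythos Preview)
Your overall scaffolding matches the paper's proof: the pivot only prunes, so Lemma~\ref{lemma:outputGCE} and the uniqueness half of Lemma~\ref{lemma:enum-timemax} carry over unchanged, and the only real content is the inductive step showing that some vertex of $C\setminus R$ survives in $(P\setminus Del)\setminus Q$. Case~1 is fine.

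The gap is in Case~2. Your witness $u\in C$ with $\EG{t}(u,p)<\EG{t}(C)$ need not lie in $C\setminus R$: the short edge to $p$ may attach at a vertex already absorbed into $R$. Concretely, take $R=\{a,b\}$, $C=\{a,b,c\}$, $p=d$, with every edge inside $C$ ending at time $5$, $\EG{t}(a,d)=1$, and $\EG{t}(b,d)=\EG{t}(c,d)=10$. Then $\EG{t}(C)=5>\EG{t}(C\cup\{p\})=1$, but the unique witness is $u=a\in R$. Your justification ``$u\notin R$ since otherwise $\EG{t}(C)\le \EG{t}(u,p)$'' is simply false: $(u,p)$ is not an edge of $C$ (recall $p\notin C$), so $\EG{t}(C)$ tells you nothing about $\EG{t}(u,p)$ when $u\in R$. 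Your subsequent chain $\EG{t}(C)\le\EG{t}(R\cup\{u\})$ also silently uses $R\cup\{u\}\subseteq C$, i.e.\ $u\notin R$, so the whole step collapses. (In the example $c$ is indeed outside $Del$, so the induction does go through --- just not via your chosen $u$.)

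The paper sidesteps this by arguing by contradiction rather than constructively: assume $C\setminus R\subseteq Del$. Then for every $u\in C\setminus R$ the defining equality $\EG{t}(R\cup\{u\})=\EG{t}(R\cup\{u,p\})$ forces $\EG{t}(v,p)\ge\EG{t}(R\cup\{u\})\ge\EG{t}(C)$ for every $v\in R\cup\{u\}$; taking the union over $u\in C\setminus R$ gives $\EG{t}(v,p)\ge\EG{t}(C)$ for every $v\in C$, hence $\EG{t}(C\cup\{p\})=\EG{t}(C)$, contradicting vertex-maximality. Swapping your witness construction for this short contradiction closes the gap; the rest of your proof is correct as written.
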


\begin{proof}
  It is clear that adding a pivot to $\GCE$ as described in Section~\ref{sec:pivot} reduces the set of candidate nodes to add to the clique $R$ under construction.
  Thus, the pivot can only reduce the set $Cliques$ on Line~\ref{line:framework:forC}.
  So we can assert that all cliques output by the algorithm with a pivot are still maximal cliques of $L$ and that they are all distinct.
  
  It remains to show that each maximal clique is effectively output by the algorithm with a pivot.
  Let $(C,[t,\EG{t}(C)])$ be a maximal clique.
  Following the same scheme as the one that we used for the proof of Lemma~\ref{lemma:enum-timemax}, we show by induction on $2 \leq k \leq |C|$ the property
  $\P(k)$: \textit{``There is a call $\GCE{R,P,X,ForbidEdges,t}$ within Algorithm~\ref{algo:framework} with a pivot, such that $R$ satisfies $|R|=k$ with $R \subseteq C \subseteq R \cup P$ and $ForbidEdges$ contains no edge of $C$.''}
  Then, the call corresponding to $k=|C|$ would demonstrate the result.
  The initialization step is the same as in the proof of Lemma~\ref{lemma:enum-timemax}.
  Now assuming $\P(k)$ is true, given $k < |C|$ and $\GCE{R,P,X,ForbidEdges,t}$ the corresponding call, we only need to show that there is a vertex of $C \setminus R$ to extend $R$ into $C$ which is in $(P \setminus Del) \setminus Q$.
  Let us suppose that it is not the case, then $((P \setminus Del) \setminus Q) \cap C = \emptyset{}$ and since $ForbidEdges$ does not have any edge in $C$, $C \cap Q = \emptyset$, so we must have $C \subseteq R \cup Del$.
  Now, by definition of $Del$, we have  $R \cup Del \subseteq \Nt(p)$, so $C \subseteq \Nt(p)$.
  This implies two things: $p \notin C$, and $C \cup \{p\}$ is a clique of $G_t$.
  Moreover, $C \subseteq R \cup Del$ implies that $\EG{t}(C) = \EG{t}(C \cup \{p\})$ which contradicts the fact that $(C,[t, \EG{t}(C)])$ is a maximal clique.
  Hence, $\P(k+1)$ is true, and we deduce the expected result.
\end{proof}

\subsection{Complexity}

We compute here the complexity as a function of the input link stream parameters, and then as a function of the  features of the output. 
Throughout this section, we refer to the following characteristics of the link stream:

\begin{itemize}
\item[-] $|T|$: number of different time instants at which links begin or end, \emph{i.e.}:
  $$|T| = |\{t \in T \ | \ \exists (b,e,u,v) \in E, \ t=b \text{ or } t=e\}|$$
  
\item[-] $m=|E|$: the number of links;
  note that each link $(b,e,u,v)$ in the link stream corresponds to two times $b,e \in T$,
  therefore $|T| \leq 2m$;

\item[-] $d$: the maximal degree of a vertex in any static graph $G_t$;
\item[-] $\alpha_T$: the number of time-maximal cliques in $L$; 
  notice that not all time-maximal cliques are maximal and that
  each link induces a time-maximal clique, therefore $m \leq \alpha_T$;
\item[-] $\alpha$: the number of maximal cliques (\emph{i.e.} both time-maximal and vertex-maximal) in $L$; note that $\alpha \le \alpha_T$;
\item[-] $q$: the maximal number of vertices in a clique.
\end{itemize}

\subsubsection{Complexity as a function of the input link stream parameters}

To compute the time complexity of Algorithm~\ref{algo:framework}, we need the preliminary Lemma~\ref{lemma:alphaT}, which expresses it as a function of the number of time-maximal cliques $\alpha_T$.
This complexity implicitly takes into account the factors $|T|$ and $m$ of the link stream, since $\frac{1}{2}|T| \leq m \leq \alpha_T$.
Then, we will estimate an upper bound on $\alpha_T$ to deduce a general expression of this complexity.


\begin{lemma}
  \label{lemma:alphaT}
  The time complexity of Algorithm~\ref{algo:framework} is in $\mathcal{O}\left(d^2 \cdot \alpha_T\right)$.
\end{lemma}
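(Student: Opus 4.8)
The plan is to bound the total work done by Algorithm~\ref{algo:framework} by charging each unit of work to a time-maximal clique, and then multiplying by the per-clique cost. First I would observe that the outer loop over $t \in T$ together with the inner loop over links $(t,\_,u,v)\in E$ starting at $t$ together make exactly $m$ calls to \GCE at the top level (one per link), and that the set of recursive calls made inside \GCE forms a tree whose nodes are exactly the pairs $(R, P\cup X)$ output at Line~\ref{line:staticenum:output}; by Lemma~\ref{lemma:outputGCE} each such $R$ is a clique of $G_t$, and by (the uniqueness part of) Lemma~\ref{lemma:enum-timemax} the time-maximal cliques beginning at $t$ that pass through the \GCE enumeration are listed once. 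The key counting fact is that the \emph{total} number of recursive calls to \GCE, summed over all $t$ and all links, is $\mathcal{O}(\alpha_T)$: each call either outputs a clique destined for the $Cliques$ set (these are in bijection with the time-maximal cliques $(C,[t,\EG{t}(C)])$, of which there are $\alpha_T$) or is a ``wasted'' leaf, but since \GCE is a backtracking tree in which every internal node has at least one child and every leaf corresponds to a distinct clique, the number of nodes is at most a constant times the number of leaves, which is at most $\alpha_T$. (One must be slightly careful that without the pivot every clique of $G_t$ reachable below the initial edge is output, but each of these is itself a time-maximal clique of $L$ when paired with its final time — or can be charged to one — so the count $\mathcal{O}(\alpha_T)$ still holds; this is exactly where the bound $m\le\alpha_T$ and ``each link induces a time-maximal clique'' gets used.)

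Next I would bound the cost of a single recursive call to \GCE. The expensive operations inside one call are: computing $Q$ at Line~\ref{line:staticenum:Q}; the loop at Line~\ref{line:staticenum:forup}, which for each $u \in P\setminus Q$ forms $P\cap\Nt(u)$ and $X\cap\Nt(u)$ and updates $P,X$; and (in the top-level call) intersecting two neighborhoods to form $\Nt(u)\cap\Nt(v)$. Since $R,P,X$ are all subsets of a single $\Gamma_{G_t}$-neighborhood, we have $|P|,|X|\le d$ and $|R|\le q$, and with suitable data structures (e.g.\ sorted adjacency lists or bitsets indexed by the $\le d$ candidate vertices) each intersection $P\cap\Nt(u)$ costs $\mathcal{O}(d)$; there are at most $|P\setminus Q|\le d$ such operations in the loop, giving $\mathcal{O}(d^2)$ per call. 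The set $Q$ can also be computed in $\mathcal{O}(d^2)$ (for each of the $\le d$ vertices $u\in P$, check membership of the $\le q$ edges $\{u,v\}$, $v\in R$, in $ForbidEdges$ — and here I would note $|R|\le d$ as well since $R\subseteq$ a neighborhood is loose, but more simply $R\cup P$ is a clique so $|R|\le d+1$). The outgoing edge-end lookups $\EG{t}(u,v)$ needed downstream are $\mathcal{O}(1)$. Hence every recursive call costs $\mathcal{O}(d^2)$.

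It remains to bound the post-processing loop at Lines~\ref{line:framework:forC}–\ref{line:framework:endForC}: it iterates over the $|Cliques|$ pairs $(C,\NC)$, which over all $t$ total $\alpha_T$, and for each performs the vertex-maximality test of Line~\ref{line:framework:maxtest}, checking for every $u\in\NC$ (at most $d$ of them) whether $\EG{t}(C\cup\{u\})<\EG{t}(C)$; each such check needs $\min_{v\in C}\EG{t}(u,v)$, computable in $\mathcal{O}(|C|)=\mathcal{O}(d)$ time (or $\mathcal{O}(1)$ if one maintains $\EG{t}(C)$ incrementally and observes $\EG{t}(C\cup\{u\}) = \min(\EG{t}(C),\min_{v\in C}\EG{t}(u,v))$). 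This gives $\mathcal{O}(d^2)$ per time-maximal clique as well. Combining: $\mathcal{O}(\alpha_T)$ recursive calls plus $\mathcal{O}(\alpha_T)$ post-processing iterations, each $\mathcal{O}(d^2)$, for a total of $\mathcal{O}(d^2\cdot\alpha_T)$, absorbing the $\mathcal{O}(m\cdot d)$ cost of setting up the top-level neighborhood intersections since $m\le\alpha_T$ and $d\le d^2$. The main obstacle I anticipate is the clean justification that the number of \GCE recursive calls is $\mathcal{O}(\alpha_T)$ rather than merely $\mathcal{O}(\text{number of all cliques of all }G_t)$ — one needs the charging argument that in the pivot-free version each output clique $C$ of $G_t$, taken at the specific start time $t$ being processed, either \emph{is} a time-maximal clique of $L$ or shares its charge with a unique such clique, together with the branching-tree fact that internal nodes are dominated by leaves; getting the bookkeeping right there (especially interacting with $ForbidEdges$, which only ever removes calls) is the delicate part, while everything else is routine data-structure accounting.
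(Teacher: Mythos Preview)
Your approach is essentially the same as the paper's, and the per-call cost analysis and the maximality-test analysis match closely. The one place where you make your own life harder than necessary is the count of recursive calls. You frame this as a charging argument with ``wasted leaves'' and tree-structure bounds, and flag it as the main obstacle. In fact there is nothing to charge: Line~\ref{line:staticenum:output} is unconditional, so \emph{every} recursive call outputs exactly one pair $(R,P\cup X)$; by Lemma~\ref{thm:timemax} each such $R$ (being a clique of $G_t$ that contains the initial edge starting at $t$) yields a time-maximal clique $(R,[t,\EG{t}(R)])$, and by the uniqueness part of Lemma~\ref{lemma:enum-timemax} these are all distinct. Hence the number of recursive calls is exactly $\alpha_T$, not merely $\mathcal{O}(\alpha_T)$, and no leaf/internal-node bookkeeping is needed. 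Once you see this bijection, your ``delicate part'' disappears and the rest of your argument goes through as written.
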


\begin{proof}
  We recall that, by Lemma~\ref{lemma:enum-timemax}, for each $t \in T$, there is exactly one pair $(C,\NC)$ enumerated in the set $Cliques$ for each time-maximal clique of the link stream that begins at $t$.
  Thus, there are a total of $\alpha_T$ couples $(C,\NC)$ enumerated by the whole global loop 
  of Line~\ref{line:framework:fort}.
  
  First, let us analyze the complexity of a recursive call to \GCE.
  Let us consider each operation associated to such a recursive call and show that they never exceed~$\mathcal{O}\left(d^2\right)$:
  \begin{itemize}
  \item[-] computing its arguments (at Line~\ref{line:framework:cliques} or~\ref{line:staticenum:RC}) is done by intersection of sets of size at most $d$, so it is in~$\mathcal{O}\left(d\right)$;
  \item[-] the sets in the couple $ (R , P \cup X )$ output at Line~\ref{line:staticenum:output} are of size at most $d$, so the output is in~$\mathcal{O}\left(d\right)$;
  \item[-] computation of $Q$ at Line~\ref{line:staticenum:Q} corresponds, for each element $u$ of $P$, to the intersection between $R$ and the neighbors of $u$ in $ForbidEdges$.
    Each of the three sets mentioned are of size at most $d$, so this calculation is done in $\mathcal{O}\left(d^2\right)$;
  \item[-] each iteration of the loop starting on Line~\ref{line:staticenum:forup} is constituted of the computation of the arguments of another recursive call (Line~\ref{line:staticenum:RC}) counted in this one's complexity, and constant time operations (Lines~\ref{line:staticenum:Pu} and~\ref{line:staticenum:Xu}).
  \end{itemize}
  Finally, each pair $(C,\NC)$ is the output of a recursive call to \GCE, and each recursive call outputs one such pair.
  Since there are $\alpha_T$ recursive calls following the initial remark, the complexity of the corresponding operations is therefore in $\mathcal{O}\left(d^2\cdot \alpha_t\right)$.
  
  Second, the processing of each pair $(C,\NC)$ by the maximality test of Lines~\ref{line:framework:maxtest} and~\ref{line:framework:output} is done in $\mathcal{O}\left(d^2\right)$ and thus the complexity of all maximality tests done by the entire algorithm is in $\mathcal{O}\left(d^2 \cdot \alpha_T\right)$.
  Indeed, each test requires computing the final times $\EG{t}(C)$, and $\EG{t}(C \cup \{u\})$ for $u$ in $\NC$.
  Notice that the link stream data structure allows direct access to the end time of each link, so during successive calls to $\GCE$, it is possible to maintain the end time of the current clique $R$ in $\mathcal{O}\left(d\right)$.
  This is because when we add a vertex $u$ to $R$ at Line~\ref{line:staticenum:RC} of Algorithm~\ref{algo:framework}, then $\EG{t}(R \cup \{u\}) = \min \left\lbrace \EG{t}(R), \min_{v \in R} \{ \EG{t}(u,v) \} \right\rbrace  $, therefore this operation can be carried out in $\mathcal{O}\left(|R|\right) \subseteq \mathcal{O}\left(d\right)$. 
  Then, access to $\EG{t}(C)$ is possible in $\mathcal{O}\left(d\right)$ operations.
  Now, $\EG{t}(C \cup \{u\})$ is obtained by adding one vertex to $C$, so with the same procedure in $\mathcal{O}\left(|C|\right) \subseteq \mathcal{O}\left(d\right)$.
  Since $\NC$ is at most of size $d$, it is computed in $\mathcal{O}\left(d\right)$. 
  Moreover, the output of Line~\ref{line:framework:output} has also a complexity in $\mathcal{O}\left(d\right)$.
  Thus, each pair $(C,\NC)$ is processed in $\mathcal{O}\left(d^2\right)$, so the total complexity of the maximality tests made by the entire algorithm is in $\mathcal{O}\left(d^2 \cdot \alpha_T\right)$.

  Finally, the overall complexity, which is the sum of the two above, is in $\mathcal{O}\left(d^2 \cdot \alpha_T\right)$.
\end{proof}


We can now show the overall time complexity of our algorithm.

\begin{theorem}[Time complexity]
  \label{thm:timecomplexity}
  The time complexity of Algorithm~\ref{algo:framework} is in $\mathcal{O}\left(m \cdot 3^{d/3} \cdot 2^q \cdot d^2\right)$.
\end{theorem}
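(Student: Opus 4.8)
The plan is to combine Lemma~\ref{lemma:alphaT}, which already reduces the running time to $\mathcal{O}\left(d^2 \cdot \alpha_T\right)$, with an upper bound $\alpha_T \leq m \cdot 3^{d/3} \cdot 2^q$ on the number of time-maximal cliques; multiplying the two gives the claimed bound directly. So the whole work is to prove the inequality on $\alpha_T$.

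First I would charge each time-maximal clique to a link. By Lemma~\ref{thm:timemax}$(ii)$, every time-maximal clique $(C,[t_0,t_1])$ contains two vertices $u,v \in C$ joined by a link $(t_0,e,u,v) \in E$ that starts at $t_0$; fix one such link for each time-maximal clique. Moreover, once $t_0 = t$ and $C$ are known, $t_1 = \EG{t}(C)$ is determined by Lemma~\ref{thm:timemax}$(iii)$, so a time-maximal clique starting at $t$ is entirely specified by the pair $(C, \text{chosen link})$. This yields
$$\alpha_T \ \leq \ \sum_{(t,e,u,v)\in E} N(t,u,v),$$
where $N(t,u,v)$ denotes the number of (graph) cliques $C$ of $G_t$ with $u,v \in C$.

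Next I would bound $N(t,u,v)$. Such cliques are in bijection with the cliques (subsets of pairwise-adjacent vertices, the empty set included) of the subgraph $H$ of $G_t$ induced by $\Nt(u) \cap \Nt(v)$, via $C \mapsto C \setminus \{u,v\}$. The graph $H$ has at most $d$ vertices since $|\Nt(u)| \leq d$, and every clique of $H$ has at most $q-2$ vertices because adding back $u$ and $v$ must keep the size at most $q$. Every clique of $H$ is contained in some maximal clique of $H$; by the Moon--Moser bound (as used by Tomita~\etal~\cite{tomita2006}), $H$ has at most $3^{d/3}$ maximal cliques, and each of them, being of size at most $q-2$, has at most $2^{q-2} \leq 2^q$ subsets. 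Hence $N(t,u,v) \leq 3^{d/3} \cdot 2^q$, and summing over the $m$ links gives $\alpha_T \leq m \cdot 3^{d/3} \cdot 2^q$. Substituting this into Lemma~\ref{lemma:alphaT} yields the complexity $\mathcal{O}\left(m \cdot 3^{d/3} \cdot 2^q \cdot d^2\right)$.

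The charging step via Lemma~\ref{thm:timemax} is routine; the one point that needs care is the counting of \emph{all} cliques of the neighborhood graph $H$ rather than just its maximal ones. This is exactly where the extra factor $2^q$ (versus the $3^{d/3}$ one would get for maximal cliques alone) comes from, and it is handled by the observation that each clique sits inside a maximal one of size at most $q-2$, contributing at most $2^{q-2}$ subsets, combined with the Moon--Moser bound on the number of maximal cliques in a graph on $d$ vertices.
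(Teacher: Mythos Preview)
Your proof is correct and follows essentially the same approach as the paper: both reduce to Lemma~\ref{lemma:alphaT} and bound $\alpha_T$ by charging each time-maximal clique to a starting link, then applying the Moon--Moser bound together with the $2^q$ subset count on the common-neighborhood subgraph. The only cosmetic difference is that the paper phrases the count in terms of the pairs output by calls to \GCE (on the subgraph induced by $\Nt(u)\cap\Nt(v)\cup\{u,v\}$, of size at most $d+2$), whereas you argue purely combinatorially from Lemma~\ref{thm:timemax} on the subgraph $H$ induced by $\Nt(u)\cap\Nt(v)$; your version is slightly tighter (size $\le d$, clique size $\le q-2$) but lands on the same asymptotic bound.
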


\begin{proof}
  Moon and Moser showed in~\cite{moon1965cliques} that the number of maximal cliques in a graph with $n$ vertices is in $\mathcal{O}\left(3^{n/3}\right)$. 
  Now, a call to \GCE in Line~\ref{line:framework:cliques} outputs pairs $(C,\NC)$ where each $C$ is a clique (different from the others) of the sub-graph composed of the vertices of $\Nt(u) \cap \Nt(v) \cup \{u,v\}$.
  As there are at most $d+2$ vertices in this sub-graph, the number of pairs $(C,\NC)$ output by this call such that $C$ is a maximal clique of $G_t$, is in $\mathcal{O}\left(3^{d/3}\right)$.
  Now, each of these maximal cliques  is of size at most $q$ by definition, so it contains at most $2^q$ different sub-cliques.
  Therefore, the total number of pairs $(C,\NC)$ enumerated by such a call is in $\mathcal{O}\left(2^q \cdot 3^{d/3}\right)$. 
  Finally, such a call is made for each link of the link stream, so the total number of pairs $(C,\NC)$ enumerated during Algorithm~\ref{algo:framework} is in $\mathcal{O}\left(m \cdot 2^q \cdot 3^{d/3}\right)$.
  In other words, $\alpha_T$ is in $\mathcal{O}\left(m \cdot 2^q \cdot 3^{d/3}\right)$.
  Finally, from Lemma~\ref{lemma:alphaT}, the time complexity of the algorithm is in $\mathcal{O}\left(m \cdot 3^{d/3} \cdot 2^q \cdot d^2\right)$.
\end{proof}

By contrast, the complexity of the algorithm in Viard~\etal~\cite{viard2018} is in $\mathcal{O}\left(2^n \cdot n^2 \cdot m^2 \cdot (n + \log(m))\right)$ with $n = |V|$; in Himmel~\etal~\cite{himmel2017adapting} it is in $\mathcal{O}\left(m \cdot 3^{c/3} \cdot 2^c \cdot n \cdot |T|\right)$ with $c$ the maximal degeneracy of a graph $G_t$ (notice that $q-1 \leq c \leq d$); in Bentert~\etal~\cite{bentert2019}, it is in $\mathcal{O}\left(2^c \cdot \min(m^2,|T|^2) \cdot n^4\right)$.
While not directly comparable to ours, these three complexities are all products of at least two of the parameters $n$, $m$ and $|T|$ of the link stream, while there is only one single factor $m$ among those in ours (we recall that $|T| \leq 2m$).
Note that all these complexities also depend on $c$, $q$ or $d$, but these parameters are known to vary in much smaller ranges than $n$, $m$ and $|T|$ when considering real data.
Thus, this observation suggests that our algorithm might be more able to scale up to larger link streams, which we confirm experimentally in Section~\ref{sec:experiments}.

Regarding the complexity of the algorithm with a pivot, notice that in the worst case, each time-maximal clique is a maximal clique of the link stream.
In that case, the pivot cannot prune any recursive call.
Thus, the worst-case complexity is the same as the one of the version without a pivot.
Nevertheless, we will study in more details its impact in Section~\ref{sec:output-sensitive-complexity} in which we study the output-sensitive complexity.


Finally, we show in Theorem~\ref{thm:memorycomplexity} that the memory requirement of our algorithm is close to the size $m$ of the link stream.
It is possible to show that the algorithms proposed in~\cite{himmel2017adapting} and~\cite{bentert2019}  have the same memory complexity as ours, while the one in~\cite{viard2018} is exponential.

\begin{theorem}[Memory complexity]
  \label{thm:memorycomplexity}
  The spatial complexity of Algorithm~\ref{algo:framework} is in  $\mathcal{O}\left(m + q \cdot d\right)$.
\end{theorem}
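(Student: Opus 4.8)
The plan is to account separately for the four places where Algorithm~\ref{algo:framework} uses memory --- the stored link stream, the set $ForbidEdges$, the recursion stack of $\GCE$, and the scratch space of the vertex-maximality test --- and to show that each is $\mathcal{O}(m)$ or $\mathcal{O}(q\cdot d)$.

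The first and main point is that the set $Cliques$ appearing in Algorithm~\ref{algo:framework} is only a presentational convenience and need not be materialised. Each pair $(C,\NC)$ produced at Line~\ref{line:staticenum:output} by a call to $\GCE$ can be fed directly into the test of Lines~\ref{line:framework:maxtest}--\ref{line:framework:output}: indeed, by Lemma~\ref{lemma:outputGCE} the set $C$ is a clique of $G_t$, it contains the link $(t,\_,u,v)$ that triggered the call, so $(C,[t,\EG{t}(C)])$ is time-maximal by Lemma~\ref{thm:timemax}, and $\NC = P\cup X = \bigcap_{v\in C}\Nt(v)$ is exactly the data the test needs; moreover, by Lemma~\ref{lemma:enum-timemax} no pair is produced twice during the processing of a fixed $t$, so pipelining the test through this filter changes neither the output nor the correctness. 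Hence $Cliques$ contributes nothing to the memory cost, and it remains to bound the four items above.

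Then I would treat them in turn. (a)~Storing the link stream, together with an adjacency structure supporting, for each $t$, the enumeration of $\Nt(v)$ and direct access to $\EG{t}(u,v)$, fits in $\mathcal{O}(m)$; likewise the list of time events needed by the loop of Line~\ref{line:framework:fort} has size $|T|\le 2m$. (b)~At any moment $ForbidEdges$ holds one edge per link starting at the current time $t$, hence $|ForbidEdges|\le m$, and it is reset to $\emptyset$ at each $t$. (c)~Every recursive call of $\GCE$ enlarges the clique $R$ by one vertex and $R$ is a clique of $G_t$, so $|R|\le q$ and the recursion depth is at most $q-1$; each stack frame stores $R$, $P$, $X$ (plus $t$, a pointer to $ForbidEdges$, and the running value $\EG{t}(R)$), with $|R|\le q$ and $|P|+|X| = |P\cup X|\le d$, hence $\mathcal{O}(q+d)$ per frame. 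Since a clique of size $q$ forces each of its vertices to have degree at least $q-1$ in some $G_t$, we have $q\le d+1$, so a frame is $\mathcal{O}(d)$ and the whole stack is $\mathcal{O}(q\cdot d)$. (d)~The maximality test for a pair $(C,\NC)$ reuses $\NC$ (already present in the current frame, size $\le d$) and $\mathcal{O}(d)$ scratch to evaluate $\EG{t}(C\cup\{u\})$ for the $u\in\NC$; this space is reused across the $\le d$ iterations, so it is $\mathcal{O}(d)$. Summing, the total is $\mathcal{O}(m)+\mathcal{O}(m)+\mathcal{O}(q\cdot d)+\mathcal{O}(d) = \mathcal{O}(m+q\cdot d)$. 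For the pivoting variant one extra scratch set $Del\subseteq P$ of size $\le d$ is needed per frame, which is absorbed into the $\mathcal{O}(q\cdot d)$ stack bound, leaving the memory complexity unchanged.

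I expect the only genuinely delicate step to be the first one: justifying that $Cliques$ is never stored in full. Read literally, the pseudocode accumulates up to $\alpha_T$ pairs at a single time $t$, which can be exponential; the bound $\mathcal{O}(m+q\cdot d)$ relies entirely on the observation that the enumeration can be streamed through the vertex-maximality filter, and this must be argued carefully from Lemmas~\ref{lemma:outputGCE}, \ref{thm:timemax} and~\ref{lemma:enum-timemax} rather than merely asserted. The remaining bounds are routine bookkeeping on set sizes and recursion depth.
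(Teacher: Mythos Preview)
Your proposal is correct and follows essentially the same approach as the paper: both argue that $Cliques$ need not be materialised because the vertex-maximality test can be performed inside $\GCE$, then bound the link-stream storage and $ForbidEdges$ by $\mathcal{O}(m)$ and the recursion stack by $\mathcal{O}(q\cdot d)$. Your treatment is in fact more careful than the paper's on two points --- you justify the streaming of $Cliques$ via Lemmas~\ref{thm:timemax}, \ref{lemma:outputGCE} and~\ref{lemma:enum-timemax} rather than simply asserting it, and you explicitly address the pivoting variant --- while the only (harmless) omission is the set $Q$ of Line~\ref{line:staticenum:Q}, which is a subset of $P$ and hence already covered by your $\mathcal{O}(d)$ per-frame bound.
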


\begin{proof}
  In practice, we store the $m$ links of the link stream.
  We also need to store $ForbidEdges$ which is at most of size \(m \).
  Notice that it is possible to perform the
  maximality test of Line~\ref{line:framework:maxtest} inside the function
  $\GCE$,
  which eliminates the need to store more than one clique at any given time,
  it can thus be done using a memory space in $\mathcal{O}\left(q\right)\subseteq \mathcal{O}\left(m\right)$.
  Finally, we have to compute the memory cost of the calls to $\GCE$.
  Each call at Line~\ref{line:framework:cliques} generates a stack of recursive calls of size at most $q$ and for each of them, we store $R$, $P$, $X$ and $Q$, each of these sets being of size at most $d$.
  The stack therefore demands a space complexity in $\mathcal{O}\left(q \cdot d\right)$.
  All these data structures add up to a memory complexity in $\mathcal{O}\left(m + q \cdot d\right)$.
\end{proof}

\subsubsection{Output-sensitive complexity}
\label{sec:output-sensitive-complexity}

We formulate here the time complexity as a function of $\alpha$, the number of maximal cliques output by the algorithm, and $q$ the maximum number of vertices in a clique.
To do this, we consider the search trees of recursive calls to \GCE.
The internal nodes of these trees correspond to the calls for which the loop on Line~\ref{line:staticenum:forup} is not empty (i.e. which generates other child calls), while the leaves correspond to the calls that do not generate any subsequent call.
Note that all leaves correspond to time maximal cliques.

Inspired by the work of Conte~\etal~\cite{conte2022}, in what follows, we focus on the leaves of these search trees, which we separate into two categories: those which output a pair $(C,\NC)$ that corresponds to a maximal clique of the link stream, and those whose output pair does not correspond to a maximal clique. 
The latter create unnecessary computations because they do not contribute to the enumeration. 
An optimal pivot strategy would cut off the branches of the search forest so that each leaf always outputs a maximal clique.
We denote $ \l $ the total number of leaves of the search forest, $\lm$ the number of leaves that correspond to maximal cliques, and $\lnm$ the ones which are not, so that $ \l = \lm+\lnm$.
We are then interested in the ratio of ``good'' leaves: 
$$r = \dfrac{\lm}{\l} =\dfrac{\lm}{\lm+\lnm}.$$ 
This ratio can be computed either for the algorithm without a pivot or the one with a pivot.
In the first case, it quantifies the maximal possible efficiency of the pivot:
if $r$ is lower than 1 this means that there are recursive calls which are not necessary
and might be pruned by a pivot.
Comparing the ratios obtained by the algorithms with and without the pivot shows to what extent the pivoting strategy has
successfully pruned unnecessary calls.

Using this ratio allows us to describe the time complexity of Algorithm~\ref{algo:framework} as a function of the~output:

\begin{theorem}[Output-sensitive complexity]
  \label{thm:output-sensitive-complexity}
  With the above definition of $r$, we have $1 \leq \frac{1}{r} \leq 2^q$, and the output-sensitive complexity of Algorithm~\ref{algo:framework} is in $\mathcal{O}\left(\frac{1}{r} \cdot d^2 \cdot q \cdot \alpha\right)$.
\end{theorem}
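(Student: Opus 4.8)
The plan is to analyze the total work of Algorithm~\ref{algo:framework} by charging every operation to a leaf of the search forest, and then to relate the number of leaves to the number of maximal cliques $\alpha$ via the ratio $r$. First I would establish the bounds $1 \leq \frac{1}{r} \leq 2^q$. The left inequality is immediate since $\lm \leq \l$. For the right inequality, the key observation is that each maximal clique $C$ of the link stream is output at exactly one leaf (by Theorem~\ref{thm:validity}), and every leaf whose output pair is not a maximal clique must nevertheless be a sub-clique of some maximal clique; more precisely, I would argue that the leaves of the search forest lying ``below'' a given maximal clique $C$ (i.e.\ whose output set $R$ satisfies $R \subseteq C$) number at most $2^{|C|} \leq 2^q$, since each such $R$ is a distinct subset of $C$. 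Summing over the $\alpha = \lm$ maximal cliques gives $\l \leq 2^q \cdot \lm$, hence $\frac{1}{r} = \frac{\l}{\lm} \leq 2^q$.

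Next I would bound the running time in terms of $\l$. By Lemma~\ref{lemma:alphaT}, the cost of processing each recursive call to \GCE{}, including computing its arguments, emitting its output pair $(R, P\cup X)$, computing $Q$ and $Del$, and running the vertex-maximality test, is in $\mathcal{O}(d^2)$. So the total time is $\mathcal{O}(d^2)$ times the total number of nodes in the search forest. In a rooted forest where every internal node has at least two children (which holds here: a call that is not a leaf makes a recursive call for each $u$ in a nonempty candidate set, and if that set were a singleton $\{u\}$ the child would be... — here I need to be slightly careful, since an internal node could have exactly one child), the number of internal nodes is at most the number of leaves, so the total number of nodes is $\mathcal{O}(\l)$. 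If internal nodes with a single child are possible, I would instead bound the number of nodes by $\mathcal{O}(q \cdot \l)$: each root-to-leaf path has length at most $q$ (the clique $R$ grows by one vertex per level and has size at most $q$), and charging each internal node to any leaf in its subtree, each leaf is charged at most $q$ times. This gives total time $\mathcal{O}(d^2 \cdot q \cdot \l)$.

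Finally, combining $\l = \frac{1}{r}\lm = \frac{1}{r}\alpha$ with the time bound $\mathcal{O}(d^2 \cdot q \cdot \l)$ yields the output-sensitive complexity $\mathcal{O}\!\left(\frac{1}{r}\cdot d^2 \cdot q \cdot \alpha\right)$, as claimed. I would also note that the $\mathcal{O}(d^2\cdot\alpha_T)$ bound of Lemma~\ref{lemma:alphaT} is recovered, since $\alpha_T$ equals the number of leaves $\l = \frac{1}{r}\alpha$ up to the $\mathcal{O}(q)$ path-length factor (every leaf corresponds to a time-maximal clique, and conversely).

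The main obstacle I anticipate is the bookkeeping in the second paragraph: precisely controlling the ratio between the number of internal nodes (and hence total nodes, and hence total work) and the number of leaves, given that \GCE{} calls with pivoting can in principle produce an internal node with a single surviving candidate in $(P\setminus Del)\setminus Q$. Getting the clean factor of $q$ (rather than something worse) requires the observation that the forest has depth at most $q$, so that the "amortize internal nodes onto leaves" argument costs only a factor equal to the depth. A secondary subtlety is making the right-hand bound $\frac{1}{r}\leq 2^q$ airtight: one must confirm that every leaf — including the "bad" ones — has its output set $R$ contained in some maximal clique of the link stream, and that distinct leaves below the same maximal clique $C$ carry distinct subsets of $C$ (which follows from the $P\gets P\setminus\{u\}$ update guaranteeing siblings explore disjoint clique families).
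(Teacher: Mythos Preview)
Your argument for the time bound is essentially the paper's: bound the number of nodes of the search forest by $q\cdot\l$ via the depth, invoke the $\mathcal{O}(d^2)$ cost per node from Lemma~\ref{lemma:alphaT}, and finish with $\l=\frac{1}{r}\lm\leq\frac{1}{r}\alpha$. One small slip: you write $\lm=\alpha$, but in general only $\lm\leq\alpha$ holds, which is all you need for the upper bound.

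The real gap is in your proof of $\frac{1}{r}\leq 2^q$. You assert that ``each maximal clique $C$ of the link stream is output at exactly one leaf'', but this is false: a link-stream maximal clique $(C,[t,\EG{t}(C)])$ with $\NC\neq\emptyset$ can be output at an \emph{internal} node of \GCE{} (for instance, a new edge $\{a,b\}$ in a triangle $\{a,b,c\}$ where $c$'s edges end earlier: $(\{a,b\},[t,\EG{t}(a,b)])$ is maximal, yet the call on $R=\{a,b\}$ still recurses to $\{a,b,c\}$). Consequently your covering argument ``sum over the $\lm$ maximal cliques at leaves and bound each by $2^q$ sub-leaves'' does not close: you have not shown that every leaf is a sub-clique of one of \emph{those} particular maximal cliques, nor that leaves assigned to the same one carry distinct subsets (this last point is only clear once a single time $t$ is fixed).

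The paper repairs exactly this by routing through the \emph{graph} side: let $c_{\max}$ be the number of pairs $(C,\NC)$ output by \GCE{} for which $C$ is a maximal clique of the instantaneous graph $G_t$. Such a $C$ has $\NC=\emptyset$, so it is necessarily output at a leaf and automatically gives a vertex-maximal (hence maximal) link-stream clique; thus $c_{\max}\leq\lm$. On the other hand, every pair output anywhere in the forest has its $R$ contained in some maximal clique of $G_t$, and at a fixed $t$ outputs are distinct, so the total number of outputs (hence $\l$) is at most $2^q\cdot c_{\max}$. Combining, $\frac{1}{r}=\frac{\l}{\lm}\leq\frac{2^q c_{\max}}{\lm}\leq 2^q$. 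Your plan becomes correct once you insert this intermediate quantity and stop relying on link-stream maximal cliques all landing at leaves.
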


\begin{proof}
  Let us show first that $1 \leq \frac{1}{r} \leq 2^q$.
  For this, we introduce $c_{max}$: the number of pairs $(C,\NC)$ enumerated by $\GCE$ such that $C$ is a maximal clique of the associated graph $G_t$.
  These pairs can only be enumerated by calls corresponding to search tree leaves, because if $C$ is a maximal clique of $G_t$, then it cannot be expanded by a recursive call.
  Moreover, since it cannot be expanded, its associated time-maximal clique is always vertex-maximal.
  Thus, $c_{max}$ is such that $c_{max} \leq \lm$.
  Besides, each maximal clique of a graph $G_t$ contains at most $2^q$ sub-cliques, so there are at most $2^q \cdot c_{max}$ pairs $(C,\NC)$ listed in total.
  Now, there is at least one pair listed per leaf.
  Therefore, $\l \leq 2^q \cdot c_{max}$, so $\frac{1}{r} \leq \frac{2^q \cdot c_{max}}{\lm} \leq 2^q$.

  Let us now prove the expression for the time complexity.
  By definition of $q$, we know that the depth of a search tree is at most $q$.
  Then, there are at most $q \cdot \l$ nodes in the whole search forest. 
  We have seen in the proof of Lemma~\ref{lemma:alphaT} that there is exactly one node per time-maximal clique.
  So, there are $\alpha_T$  nodes, which implies that $\alpha_T \leq q \cdot \l$.
  So, using the expression of the time complexity in Lemma~\ref{lemma:alphaT}, we have that the complexity of Algorithm~\ref{algo:framework} is in $\mathcal{O}\left(d^2 \cdot q \cdot \l\right)$. 
  As $\l = \frac{1}{r} \cdot \lm$ and $\lm \leq \alpha$, the complexity is in $\mathcal{O}\left(d^2 \cdot q \cdot \frac{1}{r} \cdot \alpha\right)$.
\end{proof}

Note that the bounds given by Theorem~\ref{thm:output-sensitive-complexity} are almost tight. 
First, $\frac{1}{r}$ can be equal to $1$ for example when all the time-maximal cliques are also vertex-maximal, as in this way $\lnm=0$.
This is the case for instance when all the links end times are distinct.
However, $\frac{1}{r}$ can also be exponential in $q$ as suggested by its upper bound.
To illustrate this, consider the example of a link stream
that is equivalent to a graph because all its links have the same existence interval.
Suppose that this link stream forms a clique.
Then, it contains only one maximal clique, so that $\frac{1}{r} = \l$.
With the \BK procedure on graphs (without pivot) described in Section~\ref{sec:BKStatic}, it is possible to show that the call tree has $\l{} = 2^{q-1}-1$ leaves of size $\geq 2$.
Thus, Algorithm~\ref{algo:framework} on this link stream can reach~$\frac{1}{r} = 2^{q-1}-1$.

Nevertheless, we will see in Section~\ref{sec:experiments} that $\frac{1}{r}$ is experimentally small with a pivot: it is lower than 2 in all experiments except one.
This means that it does not have an exponential behavior on the real world link streams that we study, by contrast with its theoretical upper bound.
According to this observation and the computed complexity, we can state that the running time of our algorithm with a pivot is close to the best that we can expect.
Indeed, since the running time must process all vertices of all maximal cliques (of which there are $\alpha{}$) and the size of the largest ones is $q$,
it is expected to have a factor $q \cdot \alpha$ in the complexity.
Our algorithm only adds a multiplicative factor $\frac{1}{r} \cdot d^2$ to this, thus explaining its good performances. 

\section{Experimental evaluation}
\label{sec:experiments}

In this section, we compare the implementations of the algorithm that we have presented in Section~\ref{sec:algorithm} to the best ones provided in the literature.
These are those of Viard~\etal~\cite{viard2018}\,\footnote{\url{https://bitbucket.org/tiph_viard/cliques}},
Himmel~\etal~\cite{himmel2017adapting}\,\footnote{\url{https://fpt.akt.tu-berlin.de/temporalcliques/}},
and Bentert~\etal~\cite{bentert2019}\,\footnote{\url{https://fpt.akt.tu-berlin.de/temporalkplex/}}.
We do not compare our algorithm to the one of Banerjee and Pal~\cite{banerjee2022efficient} as they study a generalization of the maximal clique enumeration and their implementation for the special case corresponding to our specific problem is not more efficient than the other algorithms mentioned.
We do not compare either with the work of Molter~\etal~\cite{molter2021isolation}, as the authors explain that their algorithm is slower in all cases than that of Bentert~\etal

Then, we show that our algorithm scales to massive real-world link streams of more than 100 million links, we study the impact of the pivot strategy on the computational time, and finally we present results of a parallel implementation.

\subsection{Experimental setup}

\paragraph{Hardware.}

We carried out the experiments on a machine equipped with 2 processors Intel Xeon Silver 4216 with 32 cores each and 384 GB of RAM.

\paragraph{Implementations.}

We have made two implementations of Algorithm~\ref{algo:framework} which are available online\,\footnote{\url{https://gitlab.lip6.fr/baudin/maxcliques-linkstream}}: one in \texttt{Python} and another in \texttt{C++}.
We use the \texttt{Python} implementation for comparison with the state-of-the-art methods which are coded in \texttt{Python}, and the \texttt{C++} one to scale up to more massive link streams.
The \texttt{C++} implementation is inspired by the efficient implementation of the BK algorithm by Eppstein~\etal~\cite{eppstein2010} that enumerates maximal cliques on graphs.
Since using a pivot reduces the running time, we use by default the algorithm with a pivot (as detailed in Section~\ref{sec:pivot}),
except in Section~\ref{sec:pivoting}, in which we analyze the efficiency of the pivot strategy.

\paragraph{Data and pre-processing.}

Following the work of Viard~\etal in~\cite{viard2018}, we use datasets corresponding to {\em instantaneous} link streams in which all links have a duration equal to 0:
we associate a duration $\Delta$ to each link $(t, u,v)$ by replacing it by $(t,t+\Delta,u,v)$.
Note that this may create overlapping links $(b,e,u,v)$ and $(b',e',u,v)$ with $[b,e]\cap [b',e'] \not= \emptyset{}$;
we resolve this by iteratively replacing such pairs of links by $(\min(b,b'), \max(e,e'), u,v)$.
Notice that this transformation implies that the number of links may decrease as $\Delta$ increases.
This allows us to study the behavior of our algorithm with varied input, while still allowing to compare it to the ones that study $\Delta$-cliques in instantaneous link streams, since Viard~\etal~\cite{viard2018} showed that both problems are equivalent.
This pre-processing is performed before the main algorithm, and is therefore not taken into account in the computation times.
Nevertheless, its running time is linear in the number of links in the instantaneous link streams and is negligible in practice when compared to the enumeration time of maximal cliques.

We use two different families of datasets. 
The first one, whose main characteristics are detailed in Table~\ref{tab:data-bentert}, corresponds to the datasets used in Bentert~\etal~\cite{bentert2019} to compare their own algorithm to the literature. 
In this case, $\Delta$ values are chosen identical to those used in~\cite{bentert2019}, for comparison purposes.
The second one corresponds to a set of more massive datasets, whose characteristics are given in Table~\ref{tab:data-bigLS}.
We use these datasets to show that our algorithm scales to link streams up to several tens of millions of links.
In this family, $\Delta$ values are chosen as functions of $ \Theta $, the total duration of the link stream: either 0 (instantaneous), $ \Theta / 10000 $, or $ \Theta / 100 $.
The results of the corresponding experiments are detailed respectively in Section~\ref{sec:stateofart} and Section~\ref{sec:bigLS}.

These two tables illustrate the effect of increasing $\Delta$: the number of links $m$ decreases (as explained above), while the maximum degree $d$ and maximum clique size $q$ increase. 
Indeed, increasing link duration naturally makes instantaneous graphs denser overall, which leads to an increase of $d$ and $q$.
However, the effect on the number of maximal cliques $\alpha$ is not the same in all cases.
In some cases, $\alpha$ increases, probably because the increased density in instantaneous graphs induces more cliques.
In other cases $\alpha$ decreases, possibly because cliques involving the same vertices at different times are merged.

\begin{table}[!hbt]
  \centering
  \setlength{\tabcolsep}{4pt} 
  {\scriptsize \begin{tabular}{|cc|cccc|}
  \hline 
  \textbf{Dataset} & \(\boldsymbol{\Delta}\)   & \(\boldsymbol{m}\) & \(\boldsymbol{d}\) & \(\boldsymbol{\alpha}\) & \(\boldsymbol{q}\)\\
  \hline
                              & \textbf{\emph{0}} & 8,111 & 72 & 8,111 & 2\\
  \textbf{\emph{dnc~\cite{data-dnc}}} & \textbf{\emph{125}} & 7,967 & 99 & 8,038 & 4\\
                              & \textbf{\emph{3125}} & 7,454 & 108 & 7,834 & 4\\
  \hline
                              & \textbf{\emph{0}} & 20,818 & 9 & 19,037 & 6\\
  \textbf{\emph{hypertext~\cite{data-infectious-hypertext}}} & \textbf{\emph{125}} & 6,323 & 14 & 6,859 & 7\\
                              & \textbf{\emph{3125}} & 4,082 & 48 & 6,308 & 7\\
  \hline
                              & \textbf{\emph{0}} & 28,539 & 8 & 26,384 & 5\\
  \textbf{\emph{highschool11~\cite{data-highschool2011}}} & \textbf{\emph{125}} & 6,472 & 19 & 7,732 & 7\\
                              & \textbf{\emph{3125}} & 3,636 & 34 & 7,500 & 10\\
  \hline
                              & \textbf{\emph{0}} & 32,424 & 7 & 27,835 & 5\\
  \textbf{\emph{hospital-ward~\cite{data-hospital-ward}}} & \textbf{\emph{125}} & 7,971 & 12 & 9,731 & 6\\
                              & \textbf{\emph{3125}} & 3,033 & 25 & 9,856 & 9\\
  \hline
                              & \textbf{\emph{0}} & 45,047 & 5 & 42,105 & 5\\
  \textbf{\emph{highschool12~\cite{data-highschool2011}}} & \textbf{\emph{125}} & 11,329 & 10 & 12,115 & 5\\
                              & \textbf{\emph{3125}} & 5,691 & 18 & 7,268 & 7\\
  \hline
                              & \textbf{\emph{0}} & 59,795 & 31 & 59,795 & 2\\
  \textbf{\emph{facebooklike~\cite{data-facebooklike}}} & \textbf{\emph{125}} & 50,056 & 78 & 50,080 & 3\\
                              & \textbf{\emph{3125}} & 34,116 & 92 & 34,342 & 4\\
  \hline
                              & \textbf{\emph{0}} & 110,581 & 1,458 & 97,687 & 10\\
  \textbf{\emph{as-733~\cite{data-as-733}}} & \textbf{\emph{125}} & 32,485 & 1,568 & 41,965 & 12\\
                              & \textbf{\emph{3125}} & 21,466 & 1,851 & 49,293 & 18\\
  \hline
\end{tabular}
\begin{tabular}{|cc|cccc|}
  \hline 
  \textbf{Dataset} & \(\boldsymbol{\Delta}\)   & \(\boldsymbol{m}\) & \(\boldsymbol{d}\) & \(\boldsymbol{\alpha}\) & \(\boldsymbol{q}\)\\
  \hline  
                              & \textbf{\emph{0}} & 125,773 & 4 & 106,879 & 5\\
  \textbf{\emph{primaryschool~\cite{data-primary-school}}} & \textbf{\emph{125}} & 49,530 & 16 & 67,820 & 6\\
                              & \textbf{\emph{3125}} & 19,513 & 50 & 194,231 & 14\\
  \hline
                              & \textbf{\emph{0}} & 188,508 & 4 & 172,035 & 5\\
  \textbf{\emph{highschool13~\cite{data-highschool2013}}} & \textbf{\emph{125}} & 36,277 & 14 & 41,534 & 6\\
                              & \textbf{\emph{3125}} & 15,764 & 30 & 28,357 & 8\\
  \hline
                              & \textbf{\emph{0}} & 312,164 & 7 & 294,269 & 3\\
  \textbf{\emph{london~\cite{data-transport}}} & \textbf{\emph{125}} & 27,595 & 7 & 28,683 & 3\\
                              & \textbf{\emph{3125}} & 768 & 7 & 778 & 3\\
  \hline
                              & \textbf{\emph{0}} & 353,226 & 8 & 342,540 & 3\\
  \textbf{\emph{paris~\cite{data-transport}}} & \textbf{\emph{125}} & 50,248 & 8 & 51,084 & 3\\
                              & \textbf{\emph{3125}} & 1,080 & 8 & 1,093 & 3\\
  \hline
                              & \textbf{\emph{0}} & 415,000 & 134 & 229,144 & 19\\
  \textbf{\emph{flights~\cite{data-transport}}} & \textbf{\emph{125}} & 415,000 & 134 & 229,144 & 19\\
                              & \textbf{\emph{3125}} & 37,862 & 139 & 368,756 & 19\\
  \hline
                              & \textbf{\emph{0}} & 415,912 & 4 & 338,815 & 5\\
  \textbf{\emph{infectious~\cite{data-infectious-hypertext}}} & \textbf{\emph{125}} & 100,329 & 15 & 138,670 & 7\\
                              & \textbf{\emph{3125}} & 44,767 & 43 & 150,883 & 16\\
  \hline
                              & \textbf{\emph{0}} & 468,897 & 12 & 442,341 & 3\\
  \textbf{\emph{ny~\cite{data-transport}}} & \textbf{\emph{125}} & 113,651 & 12 & 117,481 & 3\\
                              & \textbf{\emph{3125}} & 661 & 12 & 674 & 3\\
  \hline
\end{tabular}
}
  \caption{
    Characteristics of the link stream datasets investigated by Bentert~\etal~\cite{bentert2019}.
    $\boldsymbol{\Delta}$ (in seconds) corresponds to the duration added to each link of the stream,
    $\boldsymbol{m}$ is the number of links, $\boldsymbol{d}$ the maximal degree, $\boldsymbol{\alpha}$ the number of maximal cliques, and $\boldsymbol{q}$ the maximal number of nodes of a clique.
  }
  \label{tab:data-bentert}
\end{table}

\begin{table}[!hbt]
  \centering
  {\footnotesize \begin{tabular}{|cc|cccc|}
  \hline
  \textbf{Dataset} & \(\boldsymbol{\Delta}\) & \(\boldsymbol{m}\) & \(\boldsymbol{d}\) & \(\boldsymbol{\alpha}\) & \(\boldsymbol{q}\)\\
  \hline
                   & \textbf{\emph{0}} & 1,108,715 & 3 & 1,108,715 & 2\\
  \textbf{\emph{stackexchange~\cite{data-stackoverflow}}} & \textbf{\emph{23,961}} & 870,128 & 47 & 894,317 & 5\\
                   & \textbf{\emph{2,396,149}} & 751,974 & 671 & 1,030,023 & 10\\
  \hline
                   & \textbf{\emph{0}} & 6,092,321 & 22 & 6,092,321 & 2\\
  \textbf{\emph{wikitalk~\cite{data-stackoverflow}}} & \textbf{\emph{20,048}} & 4,123,960 & 12,205 & 4,207,362 & 7\\
                   & \textbf{\emph{2,004,838}} & 3,078,861 & 29,543 & 4,500,754 & 10\\
  \hline
                   & \textbf{\emph{0}} & 12,223,774 & 28,714 & 12,253,571 & 17\\
  \textbf{\emph{youtube~\cite{data-youtube}}} & \textbf{\emph{1,944}} & 12,223,774 & 28,714 & 12,253,571 & 17\\
                   & \textbf{\emph{194,400}} & 10,310,419 & 28,714 & 10,656,065 & 17\\
  \hline
                   & \textbf{\emph{0}} & 18,613,039 & 79 & 131,251 & 80\\
  \textbf{\emph{copresence-Thiers~\cite{data-soc-bitcoin}}} & \textbf{\emph{38}} & 3,857,645 & 100 & 412,553 & 80\\
                   & \textbf{\emph{3,804}} & 147,125 & 217 & 5,572,145 & 102\\
  \hline
                   & \textbf{\emph{0}} & 39,949,279 & 3,463 & 39,935,611 & 13\\
  \textbf{\emph{wikipedia~\cite{data-youtube}}} & \textbf{\emph{19,318}} & 39,246,821 & 7,216 & 40,898,684 & 28\\
                   & \textbf{\emph{1,931,869}} & 38,737,308 & 36,121 & 45,440,988 & 28\\
  \hline
                   & \textbf{\emph{0}} & 47,902,566 & 4 & 47,902,566 & 2\\
  \textbf{\emph{stackoverflow~\cite{data-stackoverflow}}} & \textbf{\emph{23,970}} & 33,948,538 & 90 & 35,298,283 & 6\\
                   & \textbf{\emph{2,397,055}} & 29,583,489 & 1,443 & 43,989,692 & 13\\
  \hline
                   & \textbf{\emph{0}} & 122,378,012 & 3,769 & 119,172,078 & 219\\
  \textbf{\emph{soc-bitcoin~\cite{data-soc-bitcoin}}} & \textbf{\emph{15,653}} & 93,897,987 & 28,144 & 787,519,128 & 237\\
                   & \textbf{\emph{1,565,366}} & 86,668,193 & 170,760 & - & -\\
  \hline
\end{tabular}
}
  \caption{
    Characteristics of large link stream datasets used to investigate the scaling properties of the algorithm.
    $\Delta$ values (in seconds) are set to 0, $\Theta/10000$ and $\Theta/100$, where $\Theta$ is the total duration of the link stream. 
    A hyphen (``-'') means that none of the implementations available allows enumerating all maximal cliques.
  }
  \label{tab:data-bigLS}
\end{table}

\subsection{Comparison to the state of the art}
\label{sec:stateofart}

In Table~\ref{tab:time-all-bentert} we present the computation times (in seconds) for the datasets of Table~\ref{tab:data-bentert}, which are  those studied by
Bentert~\etal~\cite{bentert2019}.
As the state-of-the-art algorithms are implemented in \texttt{Python},
we first compare them with our \texttt{Python} implementation to evaluate the
improvement brought by our algorithm.
We observe that our algorithm is the fastest in all tested cases.
Among the three state-of-the-art algorithms, \textbf{HMNS}~\cite{himmel2017adapting} is clearly the least competitive, while \textbf{BHM+}~\cite{bentert2019} and \textbf{VML}~\cite{viard2018} algorithms yield comparable results in terms of running times: depending on the link stream, one or the other is faster, while remaining in the same order of magnitude.
Besides, we also observe that our implementation in \texttt{C++} is very efficient on these datasets: in most cases, the computation time is less than 1 second, and never exceeds 3 seconds, even on the \textit{flights} dataset, for which it is typically 1,000 times faster than the state of the art, and 30 times faster than our {\tt Python} implementation.

To study more precisely the difference of performance, we visualize the computation times of the different methods on all link streams in Figure~\ref{fig:best-soa} (left panel).
The different link streams  are ordered on the X-axis by increasing number of links $m$.
Note that the scales are logarithmic.
There is one experiment per value of $\Delta$ for each dataset, and three values are displayed, corresponding to the running times of the fastest state-of-the-art method, our \texttt{Python} implementation, and our \texttt{C++} implementation.
The three horizontal lines at the top of the figure correspond to the runs for which the computation is interrupted, \textit{i.e.} exceeds 24 hours or 380 GB of RAM.
We also placed a vertical line to show the limit above which the state-of-the-art algorithms fail to provide a result.
We observe three distinct layers of points, which illustrates that the \texttt{C++} implementation is more efficient than the \texttt{Python} implementation, itself more efficient than the state of the art.
Finally, we notice a trend to have slightly higher gain with larger link streams.

Then, to better evaluate the gain, we display the speed-up factor achieved by our two implementations in the right panel of Figure~\ref{fig:best-soa}, compared to the most efficient algorithm of the state of the art.
We display a point for each dataset where at least one state-of-the-art algorithm is able to provide a result in less than 24 hours and 380 GB of RAM.
The speed-up factor is defined as the state-of-the-art execution time divided by the time of our {\tt Python} and {\tt C++} implementations.
The  position of the line $y=1$ at the bottom shows that this factor is always larger than 1.
Thus, both {\tt Python} and {\tt C++} implementations are more efficient than the other implementations available, in all the experiments performed.
In most experiments, the speed-up factor is larger than 10, even for the \texttt{Python} implementation.
Concerning the \texttt{C++} implementation the speed-up factor is always larger than 10 except for the 3 smallest link streams, and it can reach up to $10^4$ for larger link streams.

\begin{table}[!hbt]
  \centering
  \setlength{\tabcolsep}{0.5pt} 
  {\scriptsize  
  
\begin{tabular}{|cc|ccccc|}
  \hline
  \textbf{Dataset} & \textbf{$ \Delta $} & \textbf{C++} & \textbf{Python} & \textbf{BHM+} & \textbf{VML} & \textbf{HMNS}\\
  \hline
                   & \textbf{\emph{0}} & 0.05 & 0.08 & 24 & 1.5 & 178\\
  \textbf{\emph{dnc}} & \textbf{\emph{125}} & 0.05 & 0.08 & 24 & 2.6 & 155\\
                   & \textbf{\emph{3,125}} & 0.04 & 0.08 & 23 & 2.7 & 80\\
  \hline
                   & \textbf{\emph{0}} & 0.06 & 0.16 & 2.6 & 2.8 & 65\\
  \textbf{\emph{hypertext}} & \textbf{\emph{125}} & 0.05 & 0.08 & 1.4 & 1.8 & 6.9\\
                   & \textbf{\emph{3,125}} & 0.03 & 0.09 & 1.4 & 2.0 & 3.6\\
  \hline
                   & \textbf{\emph{0}} & 0.08 & 0.20 & 2.4 & 2.6 & 95\\
  \textbf{\emph{highschool11}} & \textbf{\emph{125}} & 0.05 & 0.09 & 1.2 & 1.6 & 5.4\\
                   & \textbf{\emph{3,125}} & 0.03 & 0.16 & 2.0 & 4.1 & 3.1\\
  \hline
                   & \textbf{\emph{0}} & 0.08 & 0.25 & 3.7 & 4.3 & 271\\
  \textbf{\emph{hospital-ward}} & \textbf{\emph{125}} & 0.02 & 0.12 & 1.6 & 2.6 & 17\\
                   & \textbf{\emph{3,125}} & 0.04 & 0.25 & 3.5 & 7.8 & 5.0\\
  \hline
                   & \textbf{\emph{0}} & 0.12 & 0.32 & 3.6 & 4.2 & 183\\
  \textbf{\emph{highschool12}} & \textbf{\emph{125}} & 0.06 & 0.12 & 1.6 & 2.1 & 12\\
                   & \textbf{\emph{3,125}} & 0.03 & 0.09 & 1.3 & 1.5 & 3.8\\
  \hline
                   & \textbf{\emph{0}} & 0.11 & 0.39 & 27 & 10 & 129\\
  \textbf{\emph{facebooklike}} & \textbf{\emph{125}} & 0.12 & 0.34 & 26 & 15 & 96\\
                   & \textbf{\emph{3,125}} & 0.35 & 0.25 & 25 & 10 & 59\\
  \hline
                   & \textbf{\emph{0}} & 0.31 & 1.5 & 569 & 392 & -\\
  \textbf{\emph{as-733}} & \textbf{\emph{125}} & 0.21 & 1.5 & 368 & 1,581 & 528\\
                   & \textbf{\emph{3,125}} & 0.11 & 5.1 & 562 & \(\times\) & 572\\
  \hline
\end{tabular}  
\begin{tabular}{|cc|ccccc|}
  \hline
  \textbf{Dataset} & \textbf{$ \Delta $} & \textbf{C++} & \textbf{Python} & \textbf{BHM+} & \textbf{VML} & \textbf{HMNS}\\
  \hline
                   & \textbf{\emph{0}} & 0.53 & 0.87 & 22 & 30 & 716\\
  \textbf{\emph{primaryschool}} & \textbf{\emph{125}} & 0.14 & 0.65 & 15 & 37 & 125\\
                   & \textbf{\emph{3,125}} & 0.34 & 10 & 204 & 854 & 117\\
  \hline
                   & \textbf{\emph{0}} & 0.30 & 1.2 & 19 & 24 & 1,420\\
  \textbf{\emph{highschool13}} & \textbf{\emph{125}} & 0.11 & 0.37 & 6.3 & 11 & 54\\
                   & \textbf{\emph{3,125}} & 0.08 & 0.38 & 6.3 & 11 & 14\\
  \hline
                   & \textbf{\emph{0}} & 0.40 & 1.8 & 8.6 & 11 & 3,712\\
  \textbf{\emph{london}} & \textbf{\emph{125}} & 0.10 & 0.20 & 2.1 & 1.3 & 45\\
                   & \textbf{\emph{3,125}} & 0.01 & 0.04 & 1.5 & 0.06 & 1.6\\
  \hline
                   & \textbf{\emph{0}} & 0.44 & 2.0 & 10 & 13 & 4,260\\
  \textbf{\emph{paris}} & \textbf{\emph{125}} & 0.12 & 0.33 & 2.8 & 2.5 & 101\\
                   & \textbf{\emph{3,125}} & 0.02 & 0.04 & 1.7 & 0.07 & 1.8\\
  \hline
                   & \textbf{\emph{0}} & 1.1 & 3.1 & 665 & 22 & 1,206\\
  \textbf{\emph{infectious}} & \textbf{\emph{125}} & 0.78 & 1.6 & 634 & 20 & 945\\
                   & \textbf{\emph{3,125}} & 1.1 & 8.1 & 818 & 534 & 1,004\\
  \hline
                   & \textbf{\emph{0}} & 2.7 & 93 & 36,859 & \(\times\) & 26,109\\
  \textbf{\emph{flights}} & \textbf{\emph{125}} & 2.6 & 93 & 37,076 & \(\times\) & 26,411\\
                   & \textbf{\emph{3,125}} & 1.4 & 89 & 13,420 & \(\times\) & 2,555\\
  \hline
                   & \textbf{\emph{0}} & 0.56 & 2.8 & 13 & 18 & 6,084\\
  \textbf{\emph{ny}} & \textbf{\emph{125}} & 0.18 & 0.75 & 5.2 & 6.9 & 304\\
                   & \textbf{\emph{3,125}} & 0.02 & 0.03 & 2.3 & 0.06 & 2.5\\
  \hline

\end{tabular}
}
  \caption{
    Comparison of the computation times (in seconds) of our implementations (\textbf{C++} and \textbf{Python}) to the state-of-the-art implementations: \textbf{BHM+}~\cite{bentert2019}, \textbf{VML}~\cite{viard2018} and \textbf{HMNS}~\cite{himmel2017adapting} on the link streams listed in~\cite{bentert2019} described in Table~\ref{tab:data-bentert}.
    A ``-'' symbol means that the computation time exceeds 24 hours, and a ``$\times$'' symbol means that the memory needed for the computation exceeds~380 GB.
  }
  \label{tab:time-all-bentert}
\end{table}

\begin{figure}[!hbt]
  \centering
  \includegraphics[width=0.48\linewidth]{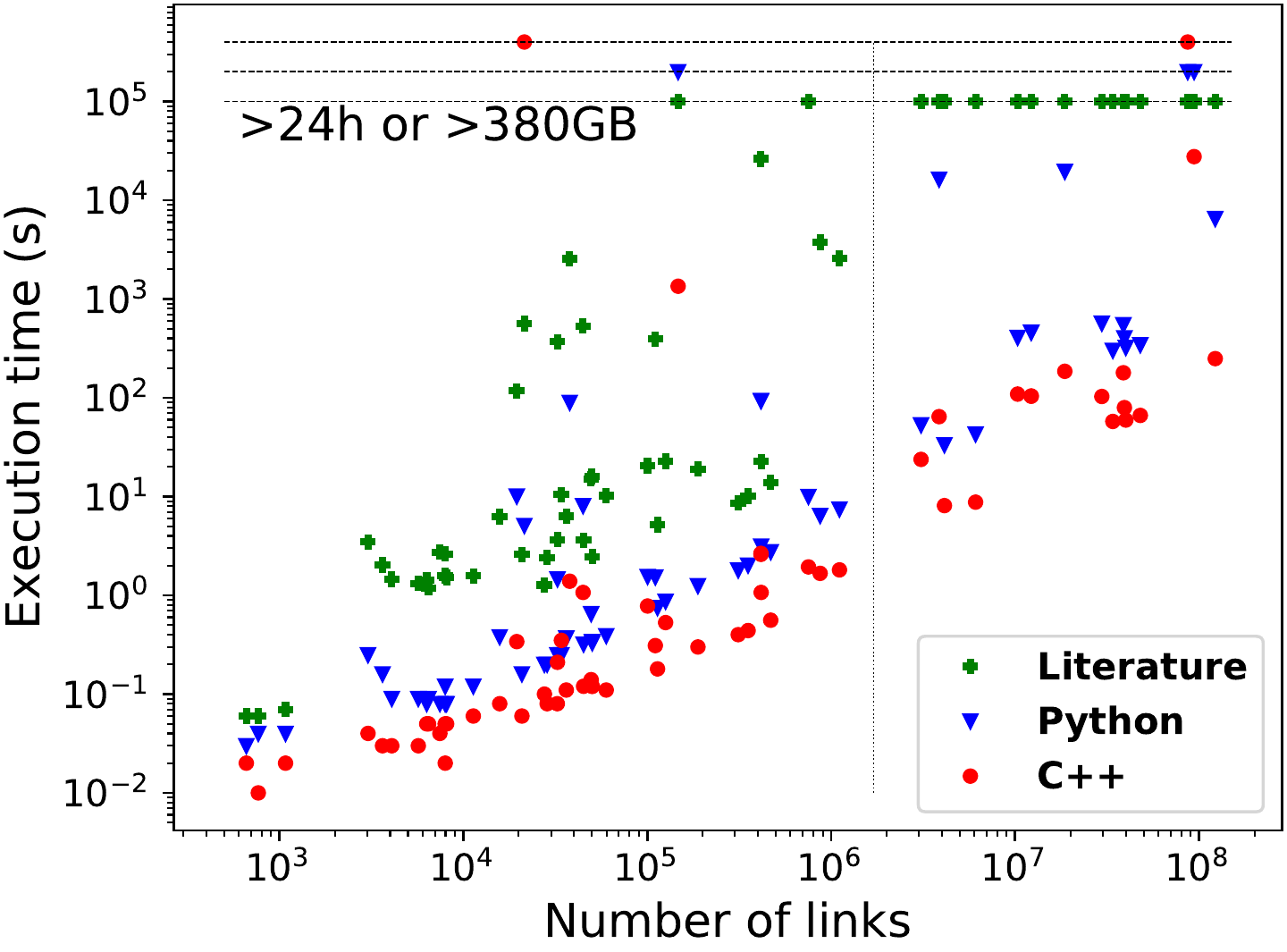}
  \hfill
  \includegraphics[width=0.48\linewidth]{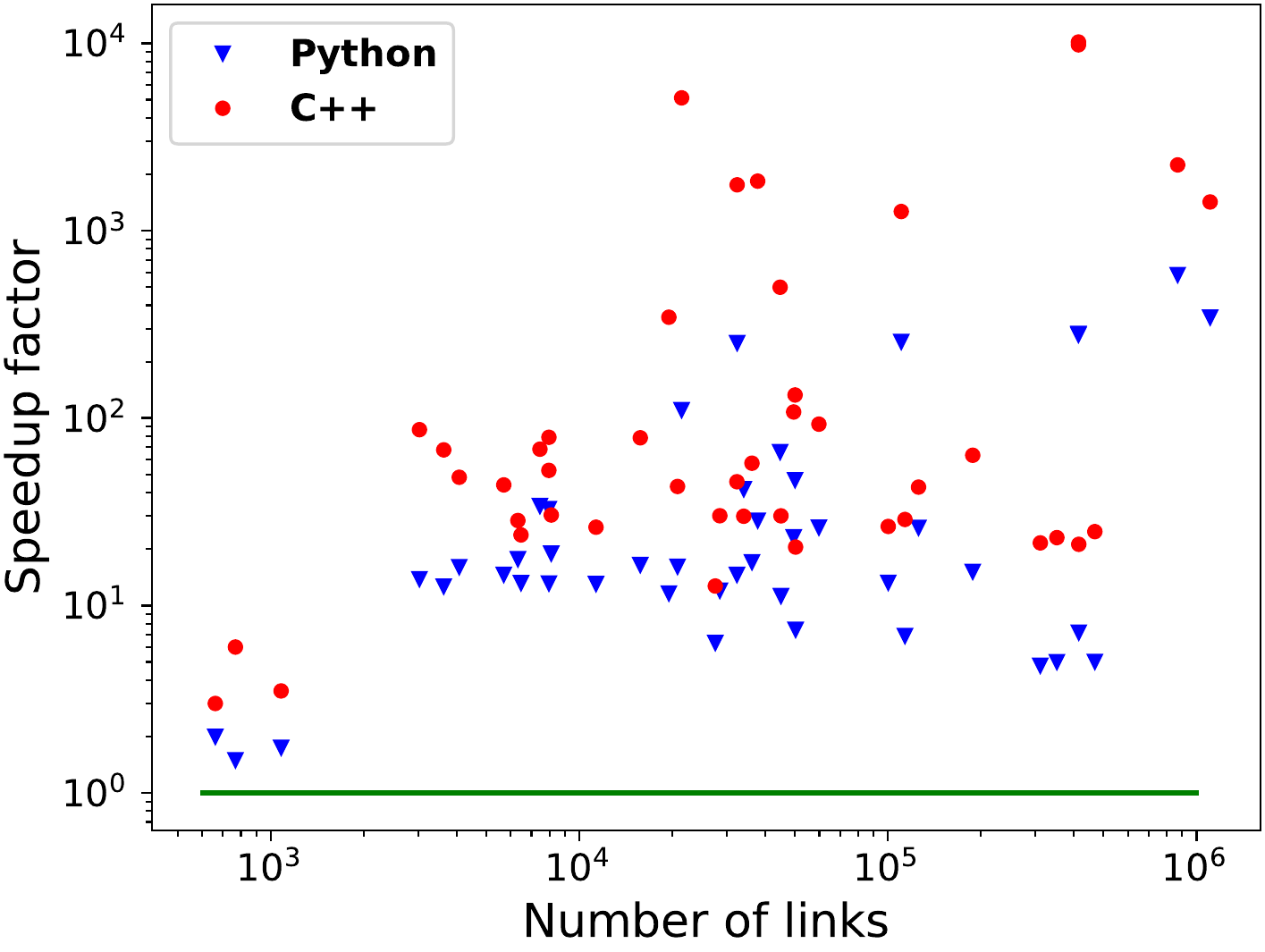} 
  \caption{
    \textbf{Left}: Summary of the computation times of maximal clique enumerations as a function of the number $m$ of links for all link stream datasets in Tables~\ref{tab:time-all-bentert} and~\ref{tab:time-all-bigLS}.
    The three lines at the top represent enumerations that are interrupted because they exceed 24 hours or 380 GB of RAM.
    \textbf{Right}:
    Speed-up factor of our implementations with respect to the fastest state-of-the-art method, as a function of the number $m$ of links.
    There is one point per dataset where at least one state-of-the-art algorithm finishes in less than 24 hours and using less than 380 GB RAM.  
  }
  \label{fig:best-soa}
\end{figure}

\subsection{Scaling up to massive real-world link streams}
\label{sec:bigLS}

In the previous section, we have seen that our algorithm is more efficient
than those of the state of the art.
It is thus relevant to see to what extent it can be used on larger datasets.
Table~\ref{tab:time-all-bigLS} presents the running times of the maximal clique enumerations performed on the massive link streams described in Table~\ref{tab:data-bigLS}.
In this table, a ``-'' symbol means that the enumeration does not finish within 24 hours for the algorithm under examination and a ``$\times$'' symbol indicates that it requires more than 380 GB of RAM.

We observe that the state-of-the-art implementations do not allow enumerating the maximal cliques on these datasets, except for \textbf{VML}~\cite{viard2018} on the dataset \textit{stackexchange} with $\Delta = 0s$ and $\Delta = 23,961s$, which is much less efficient than both our implementations.
In all other cases, the experiments are interrupted either for time or for memory consumption reasons.
The \texttt{Python} implementation of our algorithm allows enumerating maximal cliques for all experiments except for~3.
The \texttt{C++} implementation allows scaling the computation further, as it completes the enumeration for all experiments except for~1: the \textit{soc-bitcoin} link stream with link duration $\Delta = 1,565,366s$. 
We observe a significant gain of the \texttt{C++} implementation compared to the \texttt{Python} implementation.

\begin{table}[!hbt]
  \centering
  {\footnotesize \begin{tabular}{|cc|ccccc|}
  \hline
  \textbf{Dataset} & $ \Delta $ & \textbf{C++} & \textbf{Python} & \textbf{BHM+} & \textbf{VML} & \textbf{HMNS}\\
  \hline
                   & \textbf{\emph{0}} & 1.8 & 7.5 & - & 2,576 & -\\
  \textbf{\emph{stackexchange}} & \textbf{\emph{23,961}} & 1.7 & 6.4 & - & 3,747 & -\\
                   & \textbf{\emph{2,396,149}} & 1.9 & 10 & - & \(\times\) & -\\
  \hline
                   & \textbf{\emph{0}} & 8.8 & 42 & - & \(\times\) & -\\
  \textbf{\emph{wikitalk}} & \textbf{\emph{20,048}} & 8.1 & 33 & - & \(\times\) & -\\
                   & \textbf{\emph{2,004,838}} & 23 & 53 & - & \(\times\) & -\\
  \hline
                   & \textbf{\emph{0}} & 103 & 459 & - & \(\times\) & -\\
  \textbf{\emph{youtube}} & \textbf{\emph{1,944}} & 104 & 461 & - & \(\times\) & -\\
                   & \textbf{\emph{194,400}} & 109 & 408 & - & \(\times\) & -\\
  \hline
                   & \textbf{\emph{0}} & 185 & 19,525 & - & \(\times\) & -\\
  \textbf{\emph{copresence-Thiers}} & \textbf{\emph{38}} & 64 & 16,305 & - & \(\times\) & -\\
                   & \textbf{\emph{3,804}} & 1,347 & - & - & \(\times\) & -\\
  \hline
                   & \textbf{\emph{0}} & 59 & 322 & - & \(\times\) & -\\
  \textbf{\emph{wikipedia}} & \textbf{\emph{19,318}} & 79 & 403 & - & \(\times\) & -\\
                   & \textbf{\emph{1,931,869}} & 179 & 547 & - & \(\times\) & -\\
  \hline
                   & \textbf{\emph{0}} & 66 & 343 & - & \(\times\) & -\\
  \textbf{\emph{stackoverflow}} & \textbf{\emph{23,970}} & 57 & 302 & - & \(\times\) & -\\
                   & \textbf{\emph{2,397,055}} & 103 & 568 & - & \(\times\) & -\\
  \hline
                   & \textbf{\emph{0}} & 249 & 6,505 & - & \(\times\) & -\\
  \textbf{\emph{soc-bitcoin}} & \textbf{\emph{15,653}} & 27,660 & - & - & \(\times\) & -\\
                   & \textbf{\emph{1,565,366}} & - & - & - & \(\times\) & -\\
  \hline
\end{tabular}
}
  \caption{
    Comparison of the computation times (in seconds) of our implementations (\textbf{C++} and \textbf{Python}) to the state-of-the-art implementations: \textbf{BHM+}~\cite{bentert2019}, \textbf{VML}~\cite{viard2018} and \textbf{HMNS}~\cite{himmel2017adapting} on the massive link streams described in Table~\ref{tab:data-bigLS}.
    A ``-'' symbol means that the computation time exceeds 24 hours, and a ``$\times$'' symbol means that the memory needed for the computation exceeds 380 GB.
  }  
  \label{tab:time-all-bigLS}
\end{table}

\subsection{Efficiency of the pivot}
\label{sec:pivoting}

In this section, we analyze and comment on the gain of computation time achieved by using a pivot, which was presented in Section~\ref{sec:pivot}.
We recall that the complexity of Algorithm~\ref{algo:framework}, as given by Theorem~\ref{thm:output-sensitive-complexity}, is in  $\mathcal{O}\left(\frac{1}{r} \cdot d^2 \cdot q \cdot \alpha\right)$, where $d$ is the maximal degree in an instantaneous graph $G_t$, $q$ is the maximal size of a clique, $\alpha$ is the number of maximal cliques and $r$ is the ratio of leaves that corresponds to maximal cliques in the tree of calls of $\GCE$.
$r$ can be computed for the enumeration performed without or with the pivot.
In the first case, $r$  quantifies the potential gain that the pivot can bring;
in the second case, it quantifies how efficient the pivot actually is: the closer $r$ gets to 1, the fewer useless branches remain in the tree.

In Table~\ref{tab:pivoting} we give the computation times of the \texttt{C++} implementation with and without a pivot for the massive link streams of Table~\ref{tab:data-bigLS}.
In both cases we report the enumeration times $t$, the ratio $r$ mentioned above and the factor $1/r$, as it appears in the complexity expression of Theorem~\ref{thm:output-sensitive-complexity}.

We first observe that the pivot allows to achieve the enumeration of maximal cliques faster.
The speed-up factor may vary a lot from a dataset to another and in 5 cases, the pivot version terminates within our time limit while the version without pivot does not.
The ratio $r$ is larger than $0.9$ in 18 experiments out of 21 with the pivot, while it is lower in most experiments without it.
This indicates that the pivot allows to cut off almost all unnecessary branches of recursive calls,  except for the \textit{copresence-Thiers} dataset.
Notice that this dataset with a link duration $\Delta = 0s$ clearly stands out, as there are only $8.7\%$ of the leaves of the call trees of $\GCE$ which correspond to maximal cliques for the enumeration with the pivot.

We can also see that the factor $1/r$, which can be very large in theory since it is in $\mathcal{O}\left(2^q\right)$  according to Theorem~\ref{thm:output-sensitive-complexity},
remains relatively small in  the experiments.
This factor never exceeds 2 with the pivot, except in the case of \textit{copresence-Thiers} with $\Delta = 0s$ mentioned above. 
Thus, from this observation, the complexity with the pivot can often be considered in practice as in $\mathcal{O}\left(d^2 \cdot q \cdot \alpha\right)$, which is within a $d^2$ factor of the output size (in $\mathcal{O}\left(q \cdot \alpha\right)$).

Finally, even with a pivot, the computation does not succeed for the largest value of $\Delta$ on the \textit{soc-bitcoin} dataset within the boundaries of the experimental procedure. 
There are other larger datasets on which no method is able to produce a result, as for instance the link stream \textit{flickr}~\cite{mislove2008growth} which has very dense instantaneous graphs. 
Based on observations not reported here, we suggest that this is not due to the size of the output, but rather to the fact that the pivot does not succeed in pruning a sufficient number of branches.

\begin{table}[!hbt]
  \centering
  {\footnotesize \begin{tabular}{|cc|ccc|ccc|}
  \hline
  \multicolumn{2}{|c|}{\textbf{Link stream}} & \multicolumn{3}{c|}{\textbf{With pivot}} & \multicolumn{3}{c|}{\textbf{Without pivot}} \\ \hline
  \textbf{Dataset} & \(\boldsymbol{\Delta}\) & \textbf{\(\boldsymbol{t}\)} & \(\boldsymbol{r}\) & \(\boldsymbol{1/r}\) & \textbf{\(\boldsymbol{t}\)} & \(\boldsymbol{r}\) & \(\boldsymbol{1/r}\)\\
  \hline
                                             & \textbf{\emph{0}} & 1.8 & 1.000 & 1 & 1.8 & 1.000 & 1\\
  \textbf{\emph{stackexchange}} & \textbf{\emph{23,961}} & 1.7 & 1.000 & 1 & 1.7 & 1.000 & 1\\
                                             & \textbf{\emph{2,396,149}} & 1.9 & 0.997 & 1.003 & 1.7 & 0.928 & 1.078\\
  \hline
                                             & \textbf{\emph{0}} & 8.8 & 1.000 & 1 & 7.7 & 1.000 & 1\\
  \textbf{\emph{wikitalk}} & \textbf{\emph{20,048}} & 8.1 & 1.000 & 1 & 7.7 & 1.000 & 1\\
                                             & \textbf{\emph{2,004,838}} & 23 & 0.995 & 1.005 & 23 & 0.891 & 1.122\\
  \hline
                                             & \textbf{\emph{0}} & 103 & 0.907 & 1.103 & 135 & 0.332 & 3.012\\
  \textbf{\emph{youtube}} & \textbf{\emph{1,944}} & 104 & 0.907 & 1.103 & 135 & 0.332 & 3.012\\
                                             & \textbf{\emph{194,400}} & 109 & 0.900 & 1.111 & 137 & 0.298 & 3.356\\
  \hline
                                             & \textbf{\emph{0}} & 185 & 0.087 & 11.49 & - & - & -\\
  \textbf{\emph{copresence-Thiers}} & \textbf{\emph{38}} & 64 & 0.757 & 1.321 & - & - & -\\
                                             & \textbf{\emph{3,804}} & 1,347 & 0.854 & 1.171 & - & - & -\\
  \hline
                                             & \textbf{\emph{0}} & 59 & 1.000 & 1 & 78 & 0.987 & 1.013\\
  \textbf{\emph{wikipedia}} & \textbf{\emph{19,318}} & 79 & 1.000 & 1 & 1,143 & 0.043 & 23.26\\
                                             & \textbf{\emph{1,931,869}} & 179 & 0.999 & 1.001 & 1,618 & 0.034 & 29.41\\
  \hline
                                             & \textbf{\emph{0}} & 66 & 1.000 & 1 & 90 & 1.000 & 1\\
  \textbf{\emph{stackoverflow}} & \textbf{\emph{23,970}} & 57 & 1.000 & 1 & 73 & 1.000 & 1\\
                                             & \textbf{\emph{2,397,055}} & 103 & 0.996 & 1.004 & 122 & 0.881 & 1.135\\
  \hline
                                             & \textbf{\emph{0}} & 249 & 0.972 & 1.029 & - & - & -\\
  \textbf{\emph{soc-bitcoin}} & \textbf{\emph{15,653}} & 27,660 & 0.911 & 1.098 & - & - & -\\
                                             & \textbf{\emph{15,653,366}} & - & - & - & - & - & -\\
  \hline
\end{tabular}
}
  \caption{
    Comparison of the computation times, in seconds, of the \texttt{C++} implementation using a pivot to the one without a pivot.
    The factor $r$, defined in Section~\ref{sec:output-sensitive-complexity}, is equal to the ratio of leaves in the call trees of \protect\GCE that correspond to a maximal clique of the link stream. 
  }
  \label{tab:pivoting}
\end{table}

\subsection{Parallel experiments}
\label{sec:parallelization}

In this section, we study a parallel version of the code to evaluate the speed-up that it brings.
Algorithm~\ref{algo:framework} is indeed easily parallelizable, as the iterations of the loop on $T$ at Line~\ref{line:framework:fort} are independent of each other.
Our procedure consists in splitting the total time interval of duration $ \Theta $ of the link stream into $n_{th}$ sub-intervals, where $n_{th}$ is the number of threads on which we perform the parallelization.
For each sub-interval in parallel, the corresponding thread enumerates all maximal cliques that start during this interval, following the loop of Line~\ref{line:framework:fort}.
We choose to split the total time interval of the link stream in such a way that approximately the same number of links starts within each sub-interval.
Thus, each thread processes approximately $\frac{m}{n_{th}}$ links.
According to the expression established by Theorem~\ref{thm:timecomplexity}, the sequential complexity is in $\mathcal{O}\left(m \cdot 3^{d/3} \cdot 2^q \cdot d^2\right)$, so by dividing the number of links by $n_{th}$, the theoretical complexity is now in $\mathcal{O}\left(\frac{1}{n_{th}} \cdot m \cdot 3^{d/3} \cdot 2^q \cdot d^2\right)$ per thread.

Figure~\ref{fig:parallel} illustrates the results of the parallelization process on the massive link stream datasets detailed in Table~\ref{tab:data-bigLS}.
It reports the execution time for the enumeration of maximal cliques as a function of the number of threads used.
On the left are the link streams for which parallelization offers an interesting reduction in computation times, while we show on the right the link streams for which parallelization does not yield any significant improvement.
We observe different behaviors depending on the link stream under study: for example, the parallelization of \emph{wikipedia} with $\Delta= \Theta~/~100$ leads to a division by up to three of the enumeration time and even more for \emph{copresence-Thiers} with $\Delta=0s$, while the same process on link streams on the right brings very little gains.
Also, we observe that for almost all datasets, the computation times do not decrease significantly when using more than 4 threads, by contrast with the theoretical expression of the complexity established above.
This low speed-up can be explained by the distribution of the link stream density through time.
Indeed, the density of the link stream is the maximal density of $G_t$ at each instant $t$, which is not affected by the parallelization process.
Even a short sub-interval of time can be associated to a dense link stream and thus generate many maximal cliques.
As the overall computation time is set by the sub-interval that takes the longest time to compute, a dense sub-stream creates a bottleneck and entails a low speed-up.

\begin{figure}[!ht]
  \centering

  \includegraphics[width=0.48\linewidth]{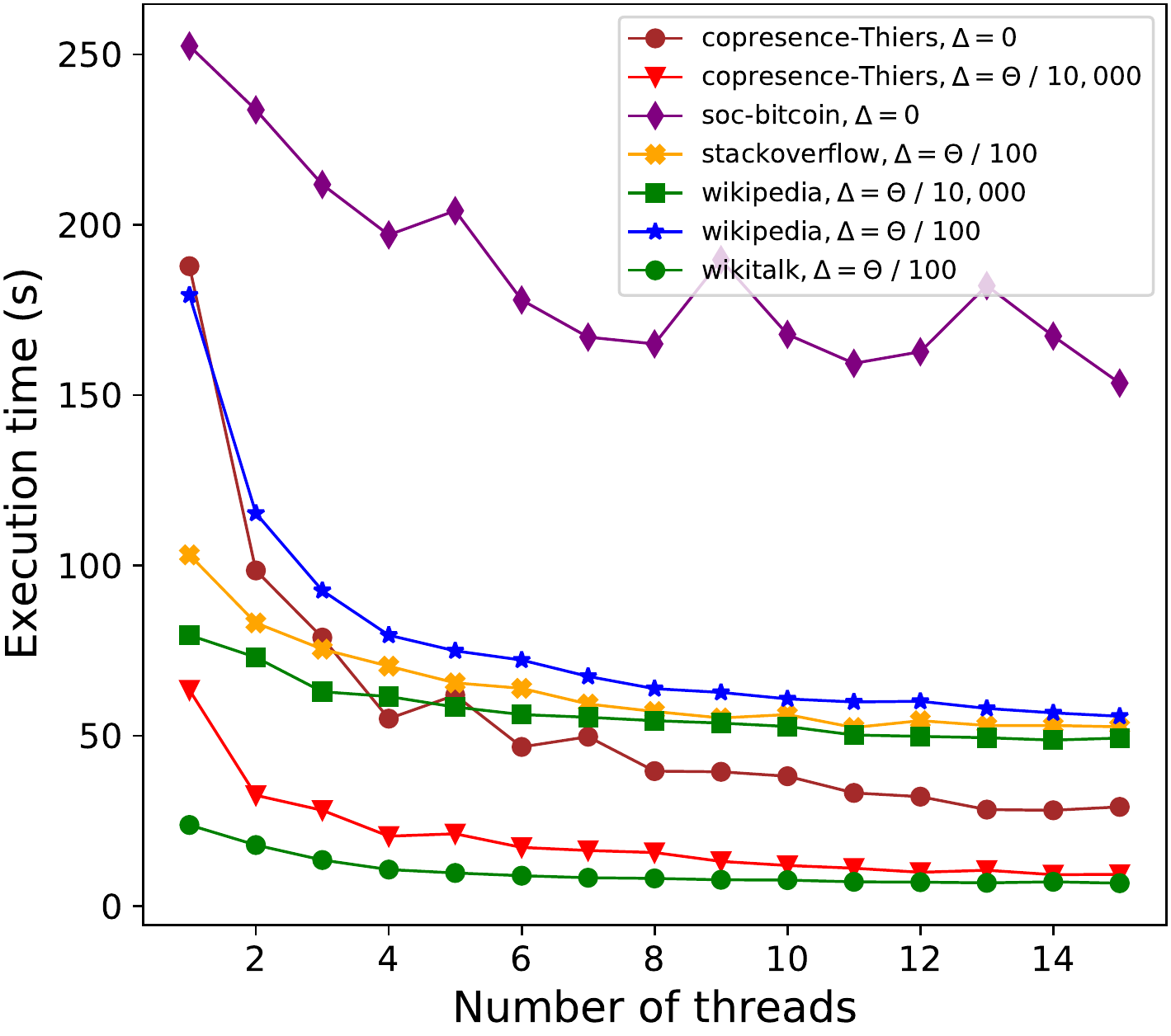}
  \hfill
  \includegraphics[width=0.48\linewidth]{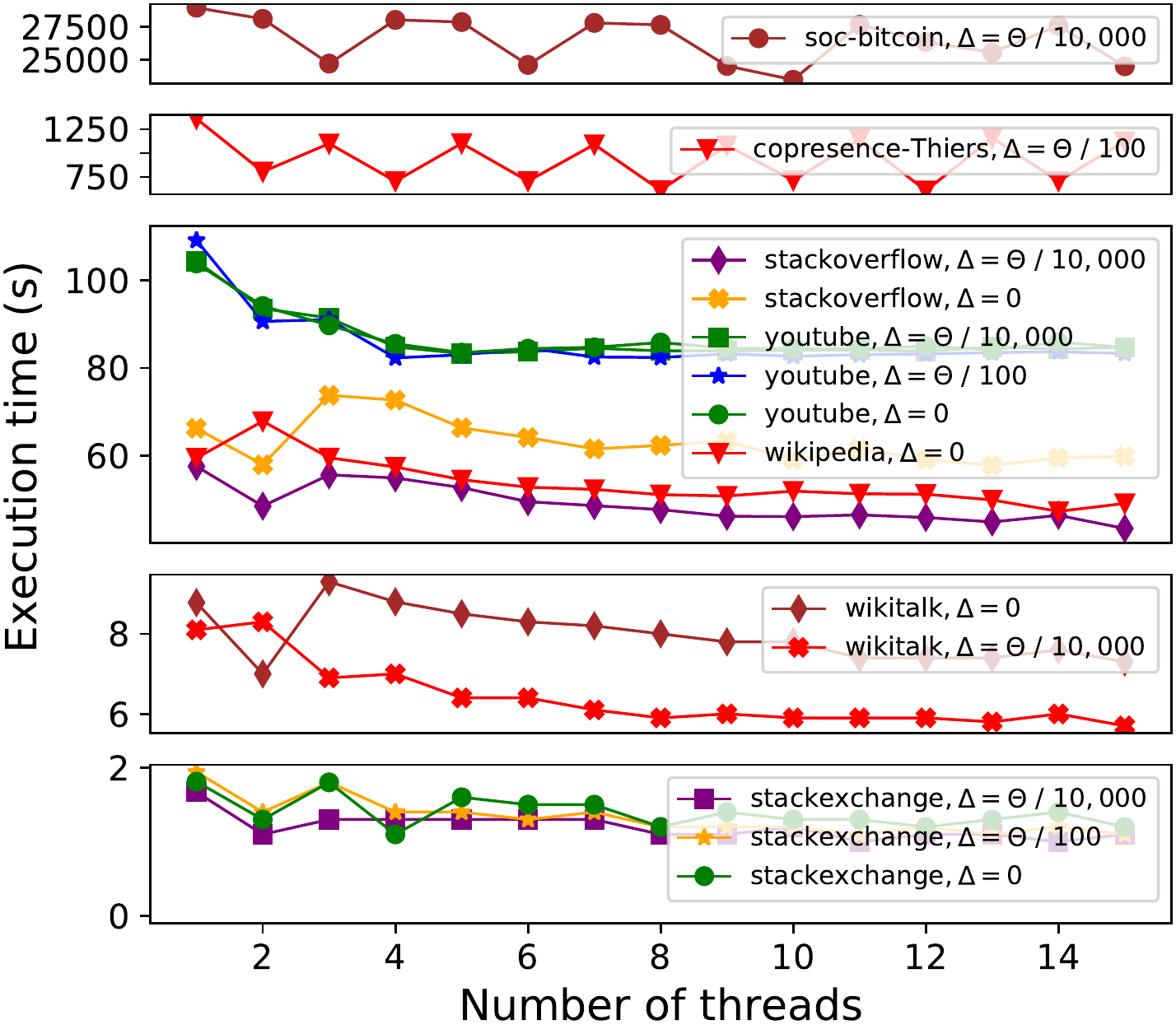} 
  
  \caption{Computation times (in seconds) as a function of the number of threads for the parallel version of the \texttt{C++} implementation on the massive link stream datasets detailed in Table~\ref{tab:data-bigLS}.
    The duration $\Delta$ of the links is expressed as a function of $\Theta$, the total duration of the link stream.
    {\bf Left:} Link streams for which parallelization offers an interesting reduction in computation time.
    {\bf Right:} Link streams for which parallelization does not work.
  }
  \label{fig:parallel}
\end{figure}

\section{Conclusion}
\label{conclusion}

%
In this paper, we have addressed the problem of maximal clique enumeration in link streams.
We propose a new algorithm to solve this problem that scales to massive real-world link streams.
We analyze its complexity  as a function of the characteristics of the input and as a function of the characteristics of the output of the algorithm
We provide two implementations (in {\tt Python} and in {\tt C++}) and perform an experimental protocol on various datasets from real interactions over time. It shows that our algorithm allows a performance gain of several orders of magnitude with respect to the state of the art.

%
The work done in this paper can be pursued along several directions.
One is to investigate the reasons why the enumeration takes so long on some link streams, especially in the cases where the time limit of our protocol is reached.
For instance, as the cliques are output on the fly, we could compute the $\frac{1}{r}$ factor appearing in the output-sensitive complexity 
on the fly and find out whether the computation stalls because it explores many unnecessary branches, or if it comes from the sheer number of maximal cliques.
Another direction would be to improve the parallelization process by refining the partitioning of the link stream based on its structural properties and balance the computation more fairly on the different threads.
For instance, we can study the structure of the link stream to anticipate the instants $t \in T$ where there are more or less maximal cliques, which would allow the construction of more suitable sub-intervals and thus a better speedup.
Also, it would be interesting to study the impact of the link duration on the enumeration of maximal cliques, both in terms of computation times and number and size of cliques output.

%
More broadly, listing other kinds of dense sub-streams in link streams is a relevant problem.
In this direction, we proposed in another work~\cite{baudin2023lscpm} an algorithm to extend to the context of links streams the well-known clique percolation method which defines communities in graphs~\cite{palla2005uncovering}.  
It demands to list $k$-cliques in link streams, which is a different problem from the one of listing maximal cliques (and computationally more tractable for small values of $k$).
We showed that using fast $k$-clique enumeration processes allows obtaining communities more efficiently than the other extension to the dynamical context of the clique percolation method~\cite{boudebza2018olcpm}.
As other types of dense sub-streams have been recently proposed~\cite{bentert2019,molter2021isolation,banerjee2022efficient}, it would be interesting to generalize our enumeration method to these cases.


\section*{Acknowledgements}

This work is financed by the ANR (French National Agency of Research) through the ANR FiT LabCom.
The authors thank the SocioPatterns\,\footnote{\url{http://www.sociopatterns.org}}, Konect~\cite{data-dnc} and SNAP~\cite{snapnets} communities, for making their datasets available.


\bibliography{biblio}
\bibliographystyle{abbrv}

\end{document}